\def\finalversion{1}

\documentclass[11pt]{article}
\usepackage[utf8]{inputenc}
\usepackage[T1]{fontenc}
\usepackage[sc]{mathpazo}
\usepackage{microtype}
\usepackage{amsmath,xcolor}
\usepackage{amssymb}
\usepackage{amsthm}
\usepackage{bm,float}
\usepackage{dsfont}
\usepackage{authblk}
\usepackage{fullpage,caption,wrapfig}
\usepackage{comment}
\usepackage{mathtools}
\usepackage{nicefrac}
\usepackage[shortlabels]{enumitem}
\usepackage[backend=bibtex, style=alphabetic, backref=true,maxbibnames=99, url=false]{biblatex} 
\usepackage{bbm}

\usepackage{nag,tikz}
\usetikzlibrary{calc}
\usetikzlibrary{decorations.pathreplacing}
\usetikzlibrary{shapes, patterns, decorations, fit, intersections, arrows, automata, positioning}

\usepackage{algorithm,algpseudocode}

\usepackage{multicol}
\usepackage[scaled]{helvet} 
\usepackage{thmtools} 
\usepackage{hyperref}
\hypersetup{
    colorlinks=true,
    linkcolor=violet,
    filecolor=magenta,      
    urlcolor=cyan,
    citecolor=blue,
    pdffitwindow=true,
}
\usepackage[capitalize, nameinlink]{cleveref}
\makeatletter
\AddToHook{cmd/appendix/before}{\def\cref@section@alias{appendix}}
\makeatother

\theoremstyle{plain}
\newtheorem{theorem}{Theorem}[section]
\newtheorem{lemma}[theorem]{Lemma}
\newtheorem{corollary}[theorem]{Corollary}

\newtheorem{fact}[theorem]{Fact}

\newtheorem{claim}[theorem]{Claim}

\theoremstyle{definition}
\newtheorem{definition}[theorem]{Definition}

\theoremstyle{remark}

\newenvironment{nestedproof}{\begin{proof}}{\end{proof}}

\newcommand{\old}[1]{}
\def\cR{{\cal R}}

\newcommand{\poly}{\text{poly}}

\newcommand{\Cons}{\textsc{MakeLam}}
\newcommand{\ltarget}{l_S}
\newcommand{\rtarget}{r_{S}}
\newcommand{\lone}{i}
\newcommand{\rone}{j}
\newcommand{\lactive}[1]{l_{#1}}
\newcommand{\ractive}[1]{r_{#1}}
\newcommand{\rlast}{z}

\newcommand{\onefifth}{\nicefrac{1}{5}}
\newcommand{\onefortieth}{\nicefrac{1}{40}}
\usepackage{stmaryrd}
\newcommand{\out}[2]{\langle #1, #2\rangle}
\newcommand{\cut}[2]{\ensuremath{\mathrm{cut}({#1}, {#2})}}

\newcommand{\Cover}{\cB}

\newcommand{\depth}{\mathit{depth}}
\newcommand{\untouched}{\mathit{untouched}}

\newcommand{\depthact}{\depth}
\newcommand{\untouchedact}{\untouched}

\newcommand{\R}{\ensuremath{\mathbb R}}

\renewcommand{\path}[2]{{ S_{#1}, \ldots, S_{#2} }}

\def\b1{{\bf 1}}
\def\1{{\bf 1}}

\def\cB{{\cal B}}

\def\cD{{\cal D}}

\def\cA{{\cal A}}

\def\cL{{\cal L}}

\def\setminus{\smallsetminus}
\def\R{\mathbb{R}}
\def\cR{{\mathcal{R}}}

\def\cC{{\cal C}}

\newcommand{\declareperson}[1]{\expandafter\newcommand\csname#1\endcsname[1]{\textcolor{orange}{#1: ##1}}}

\declareperson{Nathan}
\declareperson{zs}
\newcommand{\nnote}[1]{\textcolor{purple!70!black}{\em Neil: #1}}

\ifdefined\finalversion
    \renewcommand{\nnote}[1]{}
    \renewcommand{\Nathan}[1]{}
    \renewcommand{\zs}[1]{}
    
\fi

\hypersetup{ colorlinks=true, linkcolor=blue, filecolor=magenta, urlcolor=blue, }

\definecolor{arylideyellow}{rgb}{0.91, 0.84, 0.42}

\begin{document}

\title{Thin Trees for Near Minimum Cuts\thanks{The first and third authors were supported by the NSF CAREER grant CCF-2442250. The second author was supported by the NWO Vidi grant 016.Vidi.189.087.}}
\author[1]{Nathan Klein}
\affil[1]{Boston University}
\author[2]{Neil Olver}
\affil[2]{London School of Economics
 and Political Science}
\author[1]{Zi Song Yeoh}
\date{} 
\maketitle

\begin{abstract}
    The strong thin tree conjecture states that every $k$-edge-connected graph $G$ contains an $O(1/k)$-thin spanning tree, meaning a spanning tree which contains at most an $O(1/k)$ fraction of the edges across each cut in $G$. This conjecture is still open despite significant effort; the best current result by Anari and Oveis Gharan shows the existence of an $O(\poly\log\log n/k)$-thin tree. 

    In this work, we demonstrate that the conjecture is true if one only requires thinness for the set of $\eta$-near minimum cuts of the graph for $\eta = \onefortieth$, in other words, for the set of cuts with fewer than $(1+\onefortieth)k$ edges. Our approach constructs such a tree in polynomial time. To show this, we utilize the structure of near minimum cuts, and in particular the polygon representation of Bencz{\'u}r and Goemans, to reduce to the previously solved problem of finding a spanning tree that is $O(1/k)$-thin for all sets in a laminar family.
\end{abstract}

\section{Introduction}

Given a graph $G=(V,E)$, a spanning tree $T$ of $G$ is called \emph{$\alpha$-thin} if for every cut $S \subsetneq V$ we have the inequality $|T \cap \delta(S)| \le \alpha |\delta(S)|$. In other words, for each cut of $G$, $T$ must have at most an $\alpha$-fraction of the edges.

Goddyn~\cite{God04} conjectured that there exists a function $f$ with $\lim_{k \to \infty} f(k) = 0$ so that every $k$-edge-connected graph $G$ has an $f(k)$-thin spanning tree. Put another way, the conjecture is that for every $\alpha \in (0,1]$, there exists some $k$ so that every $k$-edge-connected graph has an $\alpha$-thin tree. This is known as the thin tree conjecture. \cite{AGMOS17} conjectured that one can pick $f(k) = C/k$ for some constant $k$, which is the best one could hope for up to constant factors, as the minimum cut has $k$ edges and any spanning tree must have at least one edge across the cut. This stronger version is known as the \emph{strong thin tree conjecture}, and was shown by \cite{AGMOS17} to imply an $O(1)$ integrality gap for the standard LP relaxation for the asymmetric traveling salesperson problem (ATSP). An $O(1)$ integrality gap and approximation factor for ATSP was obtained by Svensson, Tarnawski, and V{\'e}gh in a breakthrough result~\cite{STV18} (and further improved by \cite{TV20ATSP}), but this would give an alternative approach to the problem. 
The thin tree conjecture would also imply an alternative proof of the weak 3-flow conjecture \cite{Jae84}, which was resolved by Thomassen~\cite{Thom12}.

The thin tree conjecture is still open; the best known result is $O(\poly\log\log n /k)$ thinness~\cite{AO15}. 
One challenge is the exponential number of cuts involved.
So a natural direction to explore is to ask for thinness only for some cuts: given a collection of cuts $\mathcal{C}$, say that a spanning tree $T$ is $\alpha$-thin with respect to $\mathcal{C}$ if $|T \cap \delta(S)| \leq \alpha|\delta(S)|$ for all $S \in \mathcal{C}$.
Even here, results are limited. 
If $\mathcal{C}$ is a laminar family\footnote{Recall a family $\cL$ is laminar if for all $S,T \in \cL$ we have $S \cap T \in \{\emptyset, S ,T\}$.}, 
Klein and Olver~\cite{KO23} showed that an $O(1)$-thin spanning tree with respect to $\cC$ exists. 

We give a positive result for another natural structured family.
Given a $k$-edge-connected graph, a cut $S \subseteq V$ is an \emph{$\eta$-near minimum cut}\footnote{Our use of a strict inequality here follows the convention of Bencz{\'ur} \cite{Ben95}.} if $|\delta(S)| < (1+\eta)k$.
We show that the strong thin tree conjecture is true for the set of near minimum cuts. In particular, we prove the following theorem:
\begin{restatable}{theorem}{mainthm}\label{thm:main-result}
Let $G=(V,E)$ be a $k$-edge-connected graph. Then for $\eta = \onefortieth$, there is a spanning tree $T$ with $|T \cap \delta(S)| \le 88$ for all cuts $S$ with $|\delta(S)| < (1+\eta)k$ (and therefore, $|T \cap \delta(S)| \le \frac{88}{k} \cdot |\delta(S)|$). Furthermore, we can find such a tree in polynomial time. 
\end{restatable}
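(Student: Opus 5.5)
We reduce to the laminar case of Klein and Olver~\cite{KO23}. We build a laminar family $\cL$ of vertex sets such that every $\eta$-near minimum cut $S$ has all of its tree edges charged to a bounded number of sets of $\cL$. We then ask for a spanning tree that is $O(1)$-thin with respect to $\cL$; in the form we need, this means $|T\cap\delta(L)|\le c$ for every $L\in\cL$, for an absolute constant $c$.

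The structural input is the polygon representation of Bencz\'ur and Goemans. For $\eta<1/5$, the $\eta$-near minimum cuts of a $k$-edge-connected graph are represented as follows. Each \emph{crossing component}, i.e.\ each maximal family of mutually crossing near minimum cuts, corresponds to a polygon. The vertex classes of $G$ are assigned to the outside vertices of the polygon or to its interior. Every cut in the component is obtained by cutting the polygon with a straight line (a diagonal), and takes the union of the classes on one side. Cuts that cross no other near minimum cut form a laminar family among themselves, and distinct polygons are arranged in a tree-like (cactus-like) fashion.

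The plan is as follows.

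\textbf{(1)} Put every non-crossed near minimum cut into $\cL$.

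\textbf{(2)} For each polygon with outside classes $A_1,\dots,A_m$ in cyclic order, add to $\cL$ a laminar ``binary tree'' family over the cyclic sequence. Concretely, take the dyadic intervals of consecutive outside classes, say with $A_1$ treated as the root side. Any diagonal cut $A_i\cup\dots\cup A_j$ is then the disjoint union of $O(\log m)$ dyadic intervals. That is not constant, so we need more structure.

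The key extra fact to prove is the following. For $\eta=\onefortieth$, the number of edges between non-adjacent outside classes is small, and each single class $A_i$ has $|\delta(A_i)|$ close to $k$, with roughly $k/2$ edges going to each of its two neighbours $A_{i-1}$ and $A_{i+1}$. Given this, a cut $A_i\cup\dots\cup A_j$ crosses the tree only through edges of $\delta(A_i)\cup\delta(A_j)$, plus edges between non-adjacent classes. So we instead include each single class $A_i$ in $\cL$ (the classes are disjoint, hence laminar). We also require the tree restricted to the polygon to behave like a path: between consecutive classes the tree uses $O(1)$ edges, and it uses no edges between non-adjacent classes.

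This last requirement is not a laminar constraint. We therefore refine the construction: contract each $A_i$ (recursively handled inside $\cL$), and on the resulting near-cycle of classes choose edges $A_i$--$A_{i+1}$ for all $i$ except one. This caps every diagonal cut at $2$ cycle edges plus the tree edges inside the classes that the cut separates. Those internal separations come only from cuts nested within the $A_i$, which are handled recursively.

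\textbf{(3)} Assemble the tree. We build it from the laminar-thin tree of~\cite{KO23} applied to the family consisting of the non-crossed cuts together with all polygon classes, where each class is made connected by the tree inside it. We then stitch classes together along cycle edges as above. A near minimum cut $S$ is then crossed by $O(1)$ tree edges from each of a bounded number of sets in $\cL$: the two boundary classes of its diagonal, plus the cycle edges. Tracking the constants, including the constant from~\cite{KO23} and the slack in $\eta=\onefortieth$, should give the bound $88$. Polynomial time follows because near minimum cuts with $\eta<1/2$ are polynomially many and can be enumerated (Karger), the polygon representation is computable, and the algorithm of~\cite{KO23} is polynomial.

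The main obstacle is step (2)/(3). The interior classes of a polygon, and the interaction between the laminar tree of~\cite{KO23} and the requirement that the tree be connected inside each class while crossing between adjacent classes only $O(1)$ times, need care. I would first try to show, using $\eta=\onefortieth$ and counting edges across diagonal cuts, that interior classes have few edges. Then they can be attached to a single neighbouring outside class without harming thinness.
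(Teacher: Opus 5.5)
Your opening idea, reducing to \cite{KO23} via a laminar family, is the right frame. But you abandon it exactly where the difficulty lies, and what replaces it does not give a proof. You correctly note that putting single classes into $\cL$ does not control tree edges between non-adjacent classes. You then switch to building the tree by hand: span each class, then stitch consecutive classes with one edge each.

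Step (3) does not make this coherent. \cite{KO23} returns one spanning tree of all of $G$. Its restriction to a class need not be connected, its edges between classes are uncontrolled, and adding stitching edges to it creates cycles. Nor can \cite{KO23} simply be applied inside $G[A_i]$, which need not be $k$-edge-connected.

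If you instead build spanning trees of the classes recursively and then stitch, you run into the nesting of components. The stitching edges of a polygon have endpoints deep inside its classes, so they cross the near minimum cuts of other components nested around those endpoints. Nesting depth is unbounded, so a deep cut can be hit by stitching edges of unboundedly many ancestor polygons unless the landing points are coordinated globally. That is a non-laminar constraint system the proposal never addresses, and the bound $88$ is simply asserted.

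The auxiliary claims do not help either. The ``key extra fact'' is unproven and at best yields $O(\eta k)$-type edge counts, not $O(1)$ tree edges. Interior classes cannot ``have few edges'', since $|\delta(a)|\ge k$ for every atom.

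The missing idea is to stay inside the laminar reduction by proving a \emph{covering} statement. For each crossing component, the paper builds a laminar family of $4\eta$-near minimum cuts such that, for every $\eta$-near minimum cut $S$ in the component, $\delta(S)$ lies in the union of the boundaries of at most $8$ members. It does this using three ingredients:
\begin{itemize}
\item \emph{Special} intervals: near minimum intervals, together with intersections, unions and differences of two crossing ones.
\item The exchange property \Cref{lem:special-cross-nmc}.
\item A recursion that adds maximal chains of special prefixes, alternating between prefix and suffix chains by depth. The alternation is what lets both endpoints of any near minimum interval be handled within three levels.
\end{itemize}
Inside atoms are handled by position, not edge counts. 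The at most $19$ diagonals involved cut the polygon into regions that, by \Cref{thm:halfplanes}, contain no inside atom unless they contain an outside atom, so coverage of outside pairs transfers to all pairs. Finally, \Cref{lem:root_atom} makes the union over components cross-free, hence laminar after complementing. One application of \Cref{thm:strong-tree-laminar} then gives $8\cdot 11=88$.
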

We remark that the collection of near minimum cuts can be as large as $\Omega(|V|^2)$ (on the cycle graph); by contrast, a laminar family has size at most $2|V|-1$.

Another motivation for considering the family of near minimum cuts comes from a conjecture due to Pritchard~\cite{Pri11} concerning removable spanning trees. 
Pritchard's conjecture states that there exists a function $g(k)$ with $\lim_{k \to \infty} g(k)/k \to 0$, such that
every $k$-edge-connected graph $G=(V,E)$ contains a spanning tree $T$ so that the graph obtained after deleting $T$ is $k-g(k)$ edge connected.
The strong version of Pritchard's conjecture states that one can choose $g(k)$ to be an absolute constant.
The best current bound for this conjecture is that there is a tree we can delete so that the resulting graph is $\lfloor \frac{k}{2} \rfloor-1$ edge connected, and is an immediate consequence of the Nash-Williams theorem \cite{NW61}.

Pritchard's conjecture is implied by the thin tree conjecture (and its strong form is implied by the strong thin tree conjecture). Indeed, if $T$ is $f(k)$-thin, then the connectivity of the graph obtained by removing $T$ is at least
\[ \min_{\emptyset \subsetneq S \subsetneq V} |\delta(S)| - |T \cap \delta(S)| \geq 
\min_{\emptyset \subsetneq S \subsetneq V} |\delta(S)|(1 - f(k)) \geq k(1-f(k)); \]
so we can set $g(k) = kf(k)$.
Observe though that Pritchard's conjecture asks much less for large cuts than the thin tree conjecture.
Consider the strong forms.
For both conjectures, $T$ must have only $O(1)$ edges across every minimum cut, and across cuts close to minimum.
But for Pritchard's conjecture, $T$ may have more than $\eta k$ edges on cuts with at least $(1+\eta)k$ edges, i.e., thinness $1 -1/(\eta+1)$ suffices on these cuts.

Our result shows that we can find a tree that is thin enough on the ``most constrained'' cuts, and so we view it as progress towards Pritchard removable spanning tree conjecture.
A tree with thinness $\eta/(\eta+1)$ for cuts of size $(1+\eta)k$, for all $\eta$ above some constant, would miss at least $k$ edges from every cut $S$ with $|\delta(S)| \geq (1+\eta)k$,
more than what Pritchard's conjecture needs for these ``less constrained'' cuts. Notice that this requires a thinness of only $1-\frac{1}{k}$, for example, for cuts of size $k^2$. 
The existence of such a spanning tree is a very interesting open problem;
a resolution of this, together with this work, would likely lead to a proof of Pritchard's conjecture.
\subsection{Our Approach}

To prove \cref{thm:main-result}, we will reduce this problem to the strong thin tree conjecture for laminar families of cuts, which was shown to be true by Klein and Olver \cite{KO23}. 
Their result is phrased in the language of finding spanning trees satisfying prescribed cut bounds on a given laminar family. 
Here we state a version of their theorem for our setting.
\begin{restatable}[Implied by \cite{KO23}]{theorem}{thmthin}\label{thm:strong-tree-laminar}
Let $G=(V,E)$ be a $k$-edge-connected graph and $\cL$ be a laminar family of cuts. Then, there is a spanning tree $T$ with 
$|T \cap \delta(S)| \le \frac{66}{k}|\delta(S)|$ for all $S \in \mathcal{L}$. 

If in addition $|\delta(S)| \leq \tfrac43k$ for all $S \in \cL$, this can be improved to $|T \cap \delta(S)| \le 11$ for all $S \in \mathcal{L}$. 
Moreover, this spanning tree can be found in polynomial time. 
\end{restatable}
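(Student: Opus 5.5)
The plan is to derive this from the Klein--Olver result on spanning trees with prescribed cut upper bounds on a laminar family. That result says: given a graph $G$, a laminar family $\cL$, and a point $x$ in the spanning tree polytope of $G$, one can find in polynomial time a spanning tree $T$ with $|T \cap \delta(S)| \le O(1)\cdot x(\delta(S)) + O(1)$ for every $S \in \cL$. In the version of \cite{KO23} that we use, the additive term is absorbed when $x(\delta(S)) \ge 1$, giving a bound of the form $|T\cap\delta(S)| \le c\, x(\delta(S))$ with an explicit $c$.

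\textbf{Step 1: the fractional point.} Since $G$ is $k$-edge-connected, set $x_e = 2/k$ for every edge $e$. By Nash-Williams, $\frac{k}{2}\cdot\frac{2}{k}\mathbf{1}$ dominates a point in the spanning tree polytope; equivalently, $x$ lies in the dominant of the spanning tree polytope. Concretely, for every partition $\mathcal{P}$ of $V$ we have $x(\delta(\mathcal{P})) \ge \frac{2}{k}\cdot\frac{k|\mathcal{P}|}{2} = |\mathcal{P}|$, which exceeds the required $|\mathcal{P}|-1$. We can then scale $x$ down edgewise to a point $y \le x$ inside the polytope; this can be done in polynomial time via a convex decomposition computed by matroid techniques. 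With this choice, $y(\delta(S)) \le \frac{2}{k}|\delta(S)|$ for every cut $S$.

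\textbf{Step 2: apply Klein--Olver.} Applying the laminar theorem to $y$ and $\cL$ gives a spanning tree $T$ with $|T\cap\delta(S)| \le c_1\, y(\delta(S)) + c_2$ for each $S\in\cL$. Since $y(\delta(S)) \le 2|\delta(S)|/k$ and $|\delta(S)|\ge k$, we have $1 \le |\delta(S)|/k$, so the additive constant is absorbed:
\[
|T\cap\delta(S)| \le (2c_1 + c_2)\,\frac{|\delta(S)|}{k}.
\]
Matching the constant $66$ is then a matter of substituting the explicit constants from \cite{KO23}.

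\textbf{Step 3: the refined bound.} For the second statement, when $|\delta(S)|\le \tfrac43 k$ we get $y(\delta(S)) \le 8/3$. The bound $c_1\cdot 8/3 + c_2$ with the explicit constants of \cite{KO23}, rounded down to an integer, gives $11$.

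The main obstacle is bookkeeping: pinning down exactly which form of the \cite{KO23} guarantee (its multiplicative and additive constants, and whether it needs $x$ in the polytope itself or only in its dominant) yields $66$ and $11$. If the theorem requires $x$ in the base polytope, I would pass to $y \le x$ as in Step 1, which only improves the upper bounds. Polynomial time follows from the algorithmic nature of \cite{KO23} together with polynomial-time computation of $y$.
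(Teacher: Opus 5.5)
Your first bound is correct and is essentially the paper's argument. The point $x_e=2/k$ lies in the dominant of the spanning tree polytope, and \cite{KO23} shows that integral bounds $b_S\ge x(\delta(S))$ are violated by at most a factor $22$. Taking $b_S=\lceil\tfrac2k|\delta(S)|\rceil$ gives $22\lceil\tfrac2k|\delta(S)|\rceil\le\tfrac{44}{k}|\delta(S)|+22\le\tfrac{66}{k}|\delta(S)|$, which is your $2c_1+c_2$ with $c_1=c_2=22$.

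Step 3, however, does not work. With the same constants, $|\delta(S)|\le\tfrac43k$ only gives $22\cdot\lceil 8/3\rceil=66$. No bookkeeping with the general guarantee produces $11$.

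The bound $11$ comes from a different, sharper guarantee in \cite{KO23}, $|T\cap\delta(S)|\le 2x(\delta(S))+3$. This is available only for \emph{$\cL$-aligned} points, i.e.\ points with $x(E(S))=|S|-1$ for all $S\in\cL$ (or points dominating one). Your $y\le\tfrac2k\mathbf{1}$ is an arbitrary point of the polytope and need not be $\cL$-aligned. Worse, $\tfrac2k\mathbf{1}$ can fail to dominate \emph{any} $\cL$-aligned point. Take $K_4$ with every edge of multiplicity $k/3$ and $\cL=\{\{v_1,v_2\}\}$. The set $S=\{v_1,v_2\}$ has $|\delta(S)|=\tfrac43k$, but only $k/3$ edges lie inside $S$, of total weight $2/3<1$. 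So $\tfrac2k\mathbf{1}$ restricted to $G[S]$ is not in the dominant of the spanning tree polytope of $G[S]$, which any point dominating an $\cL$-aligned point must be.

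The missing idea is to raise the weights to $\bar x_e=3/k$ and use the hypothesis on $\cL$ to certify alignment. For any partition $\Pi$ of $S\in\cL$ with $|\Pi|\ge 2$,
\[ \sum_{P\in\Pi}\bar x(\delta_{G[S]}(P))\ \ge\ 3|\Pi|-\bar x(\delta(S))\ \ge\ 3|\Pi|-4\ \ge\ 2(|\Pi|-1). \]
Hence $\bar x$ restricted to each $G[S]$ lies in the dominant of its spanning tree polytope, which suffices for $\bar x$ to dominate an $\cL$-aligned point. The aligned guarantee then gives $2\cdot\tfrac3k|\delta(S)|+3\le 2\cdot 4+3=11$.

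This is where the hypothesis $|\delta(S)|\le\tfrac43k$ genuinely enters: it keeps $\bar x(\delta(S))\le 4$ so the partition inequality holds inside each $S\in\cL$. In your proposal it is used only to bound $y(\delta(S))$, which cannot yield the improvement.
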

This is a slightly different form of the theorem stated in \cite{KO23}, and the quantitative improvement for our setting where all sets in the laminar family are near-minimum cuts requires unpacking their proof slightly. We discuss these details in \Cref{appendix}.

Given the existence of thin trees for laminar families, it is very natural to try and leverage this.
A plausible first attempt is to take an arbitrary maximal laminar family $\cL$ of near minimum cuts and find a tree that is thin with respect to $\cL$ using \cref{thm:strong-tree-laminar}.  In the case of minimum cuts, this works: taking a maximal laminar family of min cuts and finding a tree which is thin for $\cL$ implies that $|T \cap \delta(S)| \le O(1)$ for all minimum cuts $S$. This can be shown in a straightforward manner using the cactus representation of the minimum cuts \cite{DKL76}.
It is perhaps unsurprising that this works given that minimum cuts can be uncrossed; if $S$ and $T$ are crossing minimum cuts, then so are $S \cap T$ and $S \cup T$.

So, one might hope that this simple approach extends to the set of $\eta$-near minimum cuts for some $\eta > 0$. 
Unfortunately, this is not the case. For an extreme example, consider the following graph with an even number of vertices $n$ displayed in \cref{fig:badexample}. Begin with a Hamiltonian cycle $H_1=v_1,\dots,v_n$ in clockwise order with $k$ edges between adjacent vertices. Then, add a single copy of a second Hamiltonian cycle $H_2 = v_1,v_2,v_n,v_3,v_{n-1},v_4,v_{n-2},\dots,v_{n/2+1},v_1$. In other words, to construct $H_2$, begin with $v_1$ and then alternately add the first new vertex clockwise of $v_1$ and the first new vertex counterclockwise of $v_1$, terminating at $v_{n/2+1}$, at which point we return to $v_1$.  

\begin{figure}[htb!]
\centering 
\begin{tikzpicture}[scale=2, every node/.style={circle, draw, inner sep=1pt}] 
\def\n{8}
\def\angle{360/\n}
\foreach \i in {1,...,\n} {
    \node (v\i) at (135-\i*\angle:1) {$v_{\i}$};
}
\foreach \i in {1,...,\n} {
    \pgfmathtruncatemacro{\nexti}{mod(\i, \n) + 1}
    \foreach \j in {1,...,6} {
        \path[bend left=-30+10*\j-5] (v\i) edge (v\nexti);
    }
}
\path[thick, red,bend left=-40] (v1) edge (v2);
\path[thick, red,bend left=20] (v2) edge (v8);
\path[thick, red,bend left=10] (v3) edge (v8);
\path[thick, red,bend left=0] (v3) edge (v7);
\path[thick, red,bend left=-10] (v4) edge (v7);
\path[thick, red,bend left=-20] (v4) edge (v6);
\path[thick, red,bend left=-40] (v5) edge (v6);
\path[thick, red, dashed] (v1) edge (v5);

%

\path[draw, blue, thick] 
    ($(v3)!0.5!(v4) + (-0.37, 0.17)$) 
    ellipse [x radius=0.65cm, y radius=1.2cm, rotate=-30];

\end{tikzpicture}
\caption{A bad example for taking a maximal laminar family of near minimum cuts. In solid red is a tree which is thin with respect to the family $\cL$ described in this section ($\{\{v_1,v_2\},\{v_1,v_2,v_8\},\{v_1,v_2,v_3,v_8\}, \{v_1,v_2,v_3,v_7,v_8\},\dots \}$) but not thin with respect to the blue cut with interval $\{v_2,\dots,v_5\}$.}\label{fig:badexample}
\end{figure}

Taking $k \gg n$ (although this is not strictly necessary for the example to work), it is the case that \textit{every} interval of $H_1$ is a near minimum cut if we consider the set of $\eta$-near minimum cuts for any constant $\eta > 0$. Thus, we can choose $\cL$ to be any set of intervals. Define
$$\cL = \bigl\{\{v_1,v_2\},\{v_n,v_1,v_2\},\{v_n,v_1,v_2,v_3\}, \{v_{n-1},v_n,v_1,v_2,v_3\},\dots\bigr\}.$$ 
In other words, we choose $n-2$ sets to be in $\cL$, with the $i$th set consisting of the first $i+1$ vertices of $H_2$ (starting with $v_1$). But now, taking the edges of $H_2$ (without the last one) is a thin tree for $\cL$. But it is far from thin for the cut $\{v_2,\dots,v_{n/2}\}$; in fact, it has every edge of the tree.

We remark that the set of $\eta$-near minimum cuts \emph{cannot} be uncrossed; if $A$ and $B$ are crossing $\eta$-near minimum cuts, $A \cap B$ and $A \cup B$ are both necessarily $2\eta$-near minimum cuts, but this enlargement of $\eta$ can compound over taking combinations of many cuts.
This is the key reason that near minimum cuts can give rise to much richer structures than minimum cuts.

Despite this, we are able to show that a very carefully constructed laminar family suffices.  The construction is quite delicate. 
We make heavy use of the polygon representation of $\onefifth$-near-minimum cuts~\cite{Ben95, Ben97, BG08}, which we will discuss in detail in the next section. 

\nnote{Removed the kind of outline paragraph, which also talked about the outside atom stuff implying the inside atoms. I feel like this is hard to convey at this point, so I thought to move this to Section 3.}

\medskip

We remark that we have prioritized a more structured, modular analysis over obtaining the best possible constants.
We anticipate that a more refined analysis (possibly with some adjustments to the algorithm) can show the result for some larger $\eta$.
The polygon representation applies to the set of $\eta$-near minimum cuts for any $\eta \leq \onefifth$.
Thus, a natural question is whether our techniques can be made to work up to this limit; 
this seems to require some additional ideas.

Another natural question is whether the use of the polygon representation can be avoided in order to obtain the result for $\eta > \onefifth$. A result of this type would be of significant interest, as it could shed light on how to handle much more general cut families. 

\subsection{Related Work}


The thin tree conjecture is known to hold for planar and bounded genus graphs~\cite{OS11}. 
As already mentioned, the best known result for general graphs is $O(\poly\log\log n / k)$ thinness~\cite{AO15}; this result is not constructive. 
The best known constructive result gives thinness of $O(\frac{\log n}{\log \log n \cdot k})$~\cite{AGMOS17}.\footnote{We remark that since this is a randomized algorithm and it is not known how to test the thinness of a tree in polynomial time, this is not fully constructive.} 

The study of thinness for particular cut sets began with F{\"u}rer and Raghavachari \cite{FR92}, who constructed thin trees with respect to the singleton cuts. This predates the thin tree conjecture, but questions concerning the existence of spanning trees satisfying bounds across given cuts, in this case singleton cuts, are well-studied in their own right.
Goemans~\cite{Goe06} and Singh and Lau~\cite{SL15} studied minimum cost bounded degree spanning trees. 
Bansal et al.~\cite{BKKNP13} studied the problem of designing a spanning tree which crosses each cut in a laminar family at most a specified number of times, and showed how to obtain a tree with \emph{additive} $O(\log n)$ violation whenever the fractional relaxation is feasible.
Olver and Zenklusen \cite{OZ13} showed one can achieve thinness for any chain of cuts $S_1 \subset S_2 \subset \cdots \subset S_k \subset V$, and Linhares and Swamy \cite{LS16} resolved the minimum cost version of this question up to a small approximation factor in cost.

Near minimum cuts have been studied in several contexts, for example, Karger and Stein~\cite{KS96} gave bounds on how many such cuts can exist, and Bencz{\'u}r~\cite{Ben95,Ben97} and Bencz{\'u}r and Goemans~\cite{BG08} studied their structure and defined the polygon representation of near minimum cuts, which we will rely upon heavily in this work. Karlin, Klein, and Oveis Gharan~\cite{KKO21b} further studied the polygon representation to improve the integrality gap of the subtour polytope and proved some properties we will make use of in this work. Bansal, Cheriyan, Grout, and Ibrahimpur~\cite{BCGI24} studied the problem of finding a minimum cover of all $\alpha$-near minimum cuts in a $k$-edge-connected graph (i.e. a minimum cardinality set of edges so that each near minimum cut includes an edge from the set), and give a 16-approximation for any $\alpha$. 

\section{Preliminaries}\label{sec:prelim}

\subsection{Near Minimum Cuts}

Given a (multi-)graph $G=(V,E)$ and $S \subset V$, we will use $\delta(S)$ to denote the set of edges with exactly one endpoint in $S$. Recall a graph is $k$-edge-connected if $|\delta(S)| \ge k$ for all $\emptyset \subsetneq S \subsetneq V$.
\begin{definition}[$\eta$-Near Minimum Cut]
Given a multi-graph $G=(V,E)$ with minimum cut size $k$, a cut $(S,\overline{S})$ with $\emptyset \neq S \subsetneq V$ is an $\eta$-near minimum cut if $|\delta(S)| < (1+\eta)k$, where $\eta > 0$. We will use $\eta$-NMC (or sometimes NMC, when $\eta$ is clear from context) as a shorthand.
\end{definition}

We will often identify a cut $(S, \bar{S})$ with (an arbitrary) one of its shores $S$ and $\bar{S}$, as in the following definition.

\begin{definition}[Crossing Cuts]
    We say two cuts $S$ and $T$ cross if $S \cap T, S\setminus T, T \setminus S$, and $V \setminus (S \cup T)$ are all non-empty sets. 
\end{definition}

The following is standard and we simply cite a place it is proved. 
\Nathan{Added this editorial statement, feel free to remove it if it doesn't seem helpfu.}
\nnote{I like itd, but thought to move iy into the discussion in Section 1.2, and so remove it here.}
\begin{lemma}[\cite{OSS11}]\label{lem:cutdecrement}
For $G=(V,E,x)$, let $A \subsetneq V$ be an $\eta_A$-NMC and $B \subsetneq V$ an $\eta_B$-NMC, with $A$ and $B$ crossing. Then
$A\cap B, A\cup B, A\setminus B$ and $B\setminus A$ are all $(\eta_A+\eta_B)$-NMCs.
\end{lemma}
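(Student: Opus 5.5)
The plan is to combine submodularity and posimodularity of the cut function $|\delta(\cdot)|$ with the fact that every nonempty proper subset of $V$ has at least $k$ cut edges. Here $G=(V,E,x)$ may carry nonnegative edge weights $x$, and $|\delta(S)|$ means $x(\delta(S))$. All four sets $A\cap B$, $A\cup B$, $A\setminus B$ and $B\setminus A$ are nonempty proper subsets of $V$. This holds because $A$ and $B$ cross: $A\cap B\neq\emptyset$, and $A\cup B\neq V$ since $V\setminus(A\cup B)\neq\emptyset$. Likewise $A\setminus B\neq\emptyset$, and $A\setminus B\neq V$ because $B\setminus A\neq\emptyset$; the case $B\setminus A$ is symmetric. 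Hence each of the four sets has cut value at least $k$.

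\textbf{Step 1 (submodularity).} We use the inequality
\[|\delta(A\cap B)|+|\delta(A\cup B)|\le|\delta(A)|+|\delta(B)|<(2+\eta_A+\eta_B)k.\]
To verify it, count each edge according to the pair of regions among $A\cap B$, $A\setminus B$, $B\setminus A$ and $\overline{A\cup B}$ containing its endpoints. Every edge contributes at least as much to the right-hand side as to the left-hand side. The only strict gap comes from edges between $A\setminus B$ and $B\setminus A$, which are counted twice on the right and zero times on the left.

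\textbf{Step 2 (posimodularity).} The analogous inequality is
\[|\delta(A\setminus B)|+|\delta(B\setminus A)|\le|\delta(A)|+|\delta(B)|.\]
This follows either from the same region-by-region count, where now the edges between $A\cap B$ and $\overline{A\cup B}$ supply the slack, or by applying Step 1 to $A$ and $\overline{B}$. The latter works because $\delta(\overline B)=\delta(B)$, and $A\cap\overline B=A\setminus B$, while $\overline{A\cup\overline B}=B\setminus A$.

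\textbf{Step 3 (conclusion).} In each of the two inequalities, one term is at least $k$ because its set is a nonempty proper subset. Therefore the other term is strictly less than $(2+\eta_A+\eta_B)k-k=(1+\eta_A+\eta_B)k$, so it is an $(\eta_A+\eta_B)$-NMC, as required.

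There is no real obstacle in this argument. The only care needed is the edge-case bookkeeping in Step 1, making sure every edge type is checked, and the check that crossing guarantees all four sets are proper and nonempty, which is needed to invoke the lower bound $k$.
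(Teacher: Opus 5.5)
Your proof is correct: submodularity and posimodularity of the (weighted) cut function give the sum bound, and the lower bound $k$ on the partner term applies because crossing makes all four sets nonempty proper subsets. Together these yield the strict bound $(1+\eta_A+\eta_B)k$. The paper does not prove this lemma itself but cites \cite{OSS11}, and your argument is the standard uncrossing argument used to establish it.
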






\subsection{Polygon Representation}

\begin{definition}[Connected Components] 
    Construct the \textit{cross graph} as follows. Create a vertex corresponding to each $\eta$-near minimum cut (represented as an arbitrary shore), and connect the vertices corresponding to two near minimum cuts $S,T$ if $S$ and $T$ cross. We define a \emph{connected component} of near minimum cuts as a collection of near minimum cuts that correspond to a connected component in the cross graph.  
\end{definition}

\begin{definition}[Atoms and Containment]
Given a connected component $\cC$ of near minimum cuts, let $\{a_i\}_{i \ge 0}$ be the coarsest partition of the vertices of the original graph so that for every $S \in \cC$, we have $a_i \subseteq S$ or $a_i \cap S = \emptyset$ for every $i$. We use $\cA(\cC)$ to represent $\{a_i\}_{i \ge 0}$, and call this the set of atoms. 
\end{definition}


Bencz{\'u}r and Goemans \cite{BG08} showed that any connected component of $\eta$-near minimum cuts admits a so-called \textit{polygon representation} so long as $\eta \leq \onefifth$. Here we define a polygon $P$ representing a connected component $\cC$ of $\eta$-near minimum cuts (for $\eta \leq \onefifth$) of a $k$-edge-connected graph.

\begin{figure}[htb!]
\begin{center}
\def\deg{25.714285}
\def\deg{25.714285}
\begin{tikzpicture}[inner sep=1pt,minimum size=13pt, scale=.7,pre/.style={<-,shorten <=2pt,>=stealth,thick}, post/.style={->,shorten >=1pt,>=stealth,thick},scale=1.5]
\tikzstyle{every node} = [draw, circle,color=black];
\foreach \i in {1,...,14}{
\path (\i*\deg:3) node  (a_\i) {\footnotesize \i};
}
\path (0,1) node (c) {\footnotesize 15};
\path (0,-1) node (d) {\footnotesize 16};

\foreach \i [evaluate=\i as \j using {int(Mod(\i, 14)+1)}] in {1,...,14}{
	\path (a_\i) edge  (a_\j) edge [bend left=15] (a_\j) edge [bend right=15] (a_\j);
}
\foreach \i in {1,...,6}{
	\path (a_\i) edge (c);
}
\foreach \i in {9,...,14}{
	\path (a_\i) edge (d);
}
\path (c) edge (d);
\path (a_6) edge[bend left=30] (a_8);
\path (a_7) edge[bend left=30] (a_9);
\end{tikzpicture}
\begin{tikzpicture}[inner sep=1pt,minimum size=15pt, scale=.7,pre/.style={<-,shorten <=2pt,>=stealth,thick}, post/.style={->,shorten >=1pt,>=stealth,thick}]
\def\w{5}
\def\deg{27.692307}
\foreach \i in {1,...,6}{
\draw (\deg+\i*\deg:\w) -- (\deg+\deg+\i*\deg:\w);
\path (\i*\deg+\deg/2:\w-0.29) node  (a_\i) {\tiny \i};
}

\draw (\deg+7*\deg:\w) -- (\deg+\deg+7*\deg:\w);
\path (7*\deg-1.5+\deg/2:\w-0.29) node  (a_7) {\tiny 7};
\path (7*\deg+1.5+\deg/2:\w-0.29) node  (a_8) {\tiny 8};

\foreach \i [evaluate=\i as \j using {int(Mod(\i, 14)+1)}] in {8,...,13}{
\draw (\deg+\i*\deg:\w) -- (\deg+\deg+\i*\deg:\w);
\path (\i*\deg+\deg/2:\w-0.29) node  (a_\i) {\tiny \j};
}

\foreach \i in {1,...,3,6,8,9,10,11,12,13,14}{
\draw [color=red] (\deg+\i*\deg:\w) -- (\deg+\deg*2+\i*\deg:\w);
}
\path (0,2) node (c) {\tiny 15};
\path (1,-2) node (d) {\tiny 16};

\draw [color=red] (6*\deg:\w) -- (8*\deg:\w);
\draw [color=red] (7*\deg:\w) -- (9*\deg:\w);
\draw [color=red] (6*\deg:\w) -- (9*\deg:\w);
\draw [color=red] (7*\deg:\w) -- (14*\deg:\w);
\draw [color=red] (8*\deg:\w) -- (14*\deg:\w);
\end{tikzpicture}\end{center}
\caption{
Consider the graph on the left, with minimum cut 7. On the right is the polygon representation of the connected component of all proper cuts with at most 8 edges. This component consists of all proper near minimum cuts of the graph excluding the cut $\{7,8\}$, which is in its own connected component of size 1. As defined above, 15 and 16 are inside atoms, the others are outside atoms. Note $\{7,8\}$ is a single atom.}
\label{fig:polygonrepresentation}
\end{figure}
\begin{enumerate}
	\item A polygon representation is a convex regular polygon with a collection of \textit{diagonals} connecting vertices of the polygon. All polygon edges and diagonals are drawn using straight lines in the plane. The diagonals partition the polygon into \textit{cells}. 
	\item Each atom $a \in \cA(\cC)$ is mapped to a cell of the polygon.  If one of these cells is bounded by some portion of the polygon boundary it is {\em non-empty} and we call its atom an \textit{outside atom}. We call the atoms of all other non-empty cells \textit{inside atoms}. Note that some cells may not contain any atom. WLOG label the outside atoms $a_0,\dots,a_{m-1}$ in counterclockwise order, and label the inside atoms arbitrarily. We also label points of the polygon $p_0,\dots,p_{m-1}$ such that outside atom $a_i$ is on the side $(p_i,p_{i+1})$ and $a_0$ is on the side $(p_{m-1},p_0)$.
	\item No cell has more than one incident outer polygon edge.
	\item Each diagonal (often called a representing diagonal) defines a cut such that each side of the cut is given by the union of the atoms on each side. Furthermore, the collection of cuts given by these diagonals is exactly $\mathcal{C}$. 
\end{enumerate}

For set of vertices $S$ and a polygon $P$ for connected component $\cC$, let $O_P(S)$ denote the set of outside atoms contained in $S$. By the definition of polygons,  $S,S' \in \cC$ cross if and only if $O_P(S)$ and $O_P(S')$ cross. When the polygon is clear from context, we will sometimes simply use $O(S)$.

\subsection{Inside Atoms}

Inside atoms only exist if certain structures called $k$\textit{-cycles} appear in the connected component $\cC$ being represented by $P$.
\begin{definition}[{\cite[Definition 3]{BG08}}]\label{def:kcycle}
A family of sets $C_1,\dots,C_k\subseteq V$, for some $k \ge 3$, forms a $k$-cycle if
\begin{itemize}
\item $C_i$ crosses both $C_{i-1}$ and $C_{i+1}$ (we treat $C_{k+1}$ as $C_1$ and $C_{0}$ as $C_k$);
\item $C_i\cap C_j=\emptyset$ for $j\neq i-1, i$ or $i+1$; and
\item $\bigcup_{1\leq i\leq k} C_i \neq V$.
\item If $k=3$, we have the additional condition $(C_i \cap C_{i+1}) \not\subseteq C_{i-1}$ for $i \in \{1,2,3\}$.
\end{itemize}
\end{definition}

\begin{lemma}[{\cite[Lemma 22]{BG08}}]\label{lem:no-kcycle}
Any $k$-cycle formed by cuts in a connected component ${\cal C}$ of $\eta$-near min cuts satisfies $k\geq 1/\eta$. (Note if $\eta = 0$ then there are no $k$-cycles.)
\end{lemma}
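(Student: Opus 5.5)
The plan is a direct edge-counting argument. Write $c$ for the minimum cut value of the graph, to avoid a clash with the cycle length $k$. Let $C_1,\dots,C_k$ be a $k$-cycle of $\eta$-near minimum cuts. The only property of the connected component I will use is that every $C_i$ satisfies $|\delta(C_i)|<(1+\eta)c$. The goal is to exhibit $k+1$ pairwise disjoint nonempty proper subsets of $V$ with the following property: the sum of their cut values is at most $\sum_i|\delta(C_i)|$. Once this is done, we get
\[(k+1)c\le \sum_i|\delta(C_i)|<k(1+\eta)c,\]
hence $k\eta>1$, i.e.\ $k>1/\eta$.

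\textbf{Choosing the atoms.} Set $U=\bigcup_i C_i$ and $W=V\setminus U$, which is nonempty by the third condition of \cref{def:kcycle}. Let
\[D_i=(C_i\cap C_{i+1})\setminus C_{i-1}\qquad\text{and}\qquad E_i=C_i\setminus(C_{i-1}\cup C_{i+1}).\]
For $k=3$, also let $T=C_1\cap C_2\cap C_3$; for $k\ge4$ the second condition forces $T$ to be empty.
\begin{itemize}
\item Each $D_i$ is nonempty. For $k\ge4$ this holds because $C_i$ and $C_{i+1}$ cross and $C_{i-1}\cap C_{i+1}=\emptyset$. For $k=3$ it is exactly the extra condition $(C_i\cap C_{i+1})\not\subseteq C_{i-1}$.
\item Every $D_i$ is a proper subset of $V$, and so is $U$.
\item The sets $D_1,\dots,D_k,E_1,\dots,E_k,T,W$ partition $V$.
\end{itemize}
The $k+1$ sets I will use are $D_1,\dots,D_k$ and $U$. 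The claimed inequality is
\[\sum_i|\delta(C_i)|\;\ge\;\sum_i|\delta(D_i)|+|\delta(U)|.\]

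\textbf{Verifying the inequality.} Check it edge by edge, according to which pair of parts of the partition the edge joins. For each type, compare the number of $C_i$'s it crosses (left-hand side) with the number of $D_i$'s and $U$ it crosses (right-hand side). Indices are cyclic.
\begin{itemize}
\item $D_i$--$W$: left $2$ (the cuts $C_i,C_{i+1}$); right $1+1$.
\item $D_i$--$D_{i+1}$: left $2$ (the cuts $C_i,C_{i+2}$); right $2$.
\item $D_i$--$D_j$ with $j\neq i\pm1$: left $4$; right $2$.
\item $D_i$--$E_i$ or $D_i$--$E_{i+1}$: left $1$; right $1$.
\item $D_i$--$E_j$ with $E_j$ not adjacent: left $3$; right $1$.
\item $E_i$--$W$: left $1$; right $1$.
\item $E_i$--$E_j$: left $2$; right $0$.
\item $T$--$W$: left $3$; right $1$.
\item $T$--$D_i$: left $1$ (the cut $C_{i-1}$); right $1$.
\item $T$--$E_i$: left $2$; right $0$.
\end{itemize}
Edges inside one part contribute $0$ to both sides. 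In every case the left count is at least the right count, which proves the inequality. Combined with $|\delta(D_i)|\ge c$ and $|\delta(U)|\ge c$ (by $c$-edge-connectivity), this gives the bound above.

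\textbf{Main obstacle.} The delicate points are the nonemptiness of the $D_i$ and the correct bookkeeping in the case $k=3$, where the triple intersection $T$ may be nonempty. This is precisely why the extra condition is included in \cref{def:kcycle}. Using $D_i$ with $C_{i-1}$ removed, rather than the plain intersection $C_i\cap C_{i+1}$, keeps the $D_i$ disjoint while leaving them nonempty.
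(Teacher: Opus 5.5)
Your proof is correct. The paper gives no argument of its own for this lemma; it simply quotes it from \cite{BG08}. Your self-contained edge-counting proof establishes it: you bound $\sum_i|\delta(C_i)|$ below by the cut values of the $k$ sets $D_i$ plus that of $U=\bigcup_i C_i$.

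I checked your table against the membership patterns: $D_i$ lies in exactly $C_i$ and $C_{i+1}$, $E_i$ in exactly $C_i$, $T$ in all three, and $W$ in none. Every entry is right. Removing $C_{i-1}$ from $C_i\cap C_{i+1}$ is genuinely needed when $k=3$. With the plain intersections, a $T$--$W$ edge would be counted four times on the right but only three times on the left. The extra condition in \cref{def:kcycle} is exactly what keeps your $D_i$ nonempty.

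One small slip: your opening plan calls the $k+1$ sets pairwise disjoint, but they are not, since each $D_i\subseteq U$. You never use disjointness, only that each set is a nonempty proper subset of $V$, so $(k+1)c\le\sum_i|\delta(C_i)|<k(1+\eta)c$ stands. With the paper's strict-inequality definition of NMC, this even gives the stronger bound $k>1/\eta$.
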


\begin{lemma}[{\cite[Definition 4]{BG08}}]\label{lem:inside-implies-cycle}
An atom $a \in {\cal A}({\cal C})$ is an inside atom if and only if there is a $k$-cycle $C_1,\dots,C_k \in {\cal C}$ such that $a \not\in C_i$ for all $1 \le i \le k$. 
\end{lemma}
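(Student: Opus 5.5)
We prove the two implications separately, working entirely inside a fixed polygon representation $P$ of $\cC$. We use three facts from the definition of $P$. First, every side $(p_i,p_{i+1})$ of the polygon carries exactly one outside atom $a_i$. Second, each representing diagonal splits the outside atoms into two nonempty cyclic intervals, so $O_P(S)$ is a cyclic interval for every $S\in\cC$. Third, two cuts of $\cC$ cross if and only if their interval sets cross. Throughout, a shore of a cut is identified with the union of the atoms lying on one side of its diagonal.

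\emph{($\Leftarrow$) A $k$-cycle avoiding $a$ forces $a$ to be inside.}
Let $C_1,\dots,C_k\in\cC$ be a $k$-cycle with $a\notin C_i$ for all $i$, and let $I_i=O_P(C_i)$ be the corresponding cyclic intervals.
\begin{enumerate}
\item Each $I_i$ is nonempty and is not the whole circle, because $C_i$ and its complement both contain outside atoms.
\item Consecutive intervals $I_i,I_{i+1}$ overlap, since $C_i$ and $C_{i+1}$ cross. Non-consecutive intervals are disjoint, since the sets themselves are disjoint.
\item A cyclic chain of arcs in which only neighbours meet must wind around the circle. For $k\ge4$ this should follow by walking along the chain: $I_1\cup\dots\cup I_{k-1}$ is a single arc, and $I_k$ meets both of its ends while avoiding the interior arcs $I_2,\dots,I_{k-2}$, which closes the circle. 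For $k=3$, three pairwise overlapping arcs could fail to wind around only if some pairwise intersection were contained in the third arc. The extra condition $(C_i\cap C_{i+1})\not\subseteq C_{i-1}$ is designed to exclude exactly this; it must be translated to the arcs using that each nonempty intersection of shores contains an outside atom.
\item Hence $I_1\cup\dots\cup I_k$ contains every outside atom. Since $a$ lies in no $C_i$, $a$ is not an outside atom, so it is an inside atom.
\end{enumerate}

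\emph{($\Rightarrow$) An inside atom is avoided by some $k$-cycle.}
Let $a$ be an inside atom. Its cell $F$ is a convex polygon touching the outer boundary in at most isolated points, and it is bounded by segments of diagonals $d_1,\dots,d_r$ taken in cyclic order around $F$. For each $i$, let $C_i$ be the shore of $d_i$ not containing $a$.
\begin{enumerate}
\item By construction, $a\notin C_i$ for every $i$.
\item The union of the $C_i$ is not $V$, because it misses $a$.
\item If two consecutive bounding diagonals meet at an interior corner of $F$, they cross as segments. Since every polygon side carries an atom, all four quadrants contain outside atoms, so $C_i$ and $C_{i+1}$ cross as cuts.
\end{enumerate}

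The main obstacle is that the family $\{C_1,\dots,C_r\}$ need not itself be a $k$-cycle. Two things can go wrong. Two consecutive diagonals may meet at a polygon vertex; then they share an endpoint and do not cross. Alternatively, the regions beyond two non-adjacent sides of $F$ may overlap, because their lines cross inside $P$ away from $F$.

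The first thing I would try is to pass to a minimal subfamily: among all families of representing diagonals whose $a$-free sides together cover the whole boundary cycle of outside atoms, take one of minimum size. Such a covering family exists, namely $\{C_1,\dots,C_r\}$, because $F$ does not touch the boundary along a side. Minimality then gives the $k$-cycle properties:
\begin{itemize}
\item if two non-consecutive members intersected, one could shortcut the chain and drop a diagonal;
\item if two consecutive members did not cross, their arcs would be nested or only abut, and again one could be dropped or the cover would have a gap.
\end{itemize}
The remaining care is to show that each pairwise overlap of outside-atom arcs corresponds to genuine crossing of the cuts, and to check the $k=3$ condition. For $k\ge4$ crossing is automatic from the third fact listed at the start.

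Since this is cited as \cite[Definition 4]{BG08}, where inside atoms are essentially defined by this property, an acceptable alternative is to verify that the polygon construction of \cite{BG08} places exactly those atoms in interior cells.
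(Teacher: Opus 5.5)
\textbf{There is a genuine gap, in the $(\Rightarrow)$ direction.} Your $(\Leftarrow)$ argument is sound for $k\ge 4$. The $(\Rightarrow)$ direction breaks at the claim that minimality forces consecutive members to cross.

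\emph{Minimality cannot rule out abutting arcs.} Two consecutive arcs of a minimum cover may merely \emph{abut}, meeting at a polygon vertex without sharing an outside atom. Then there is no gap and neither arc can be dropped, so minimality says nothing. Here is an example.
\begin{itemize}
\item Let the cell $F$ of $a$ be the triangle with corner $p_0$ cut out by the diagonals $p_0p_3$, $p_0p_{n-3}$ and $p_2p_{n-2}$. Keep all remaining diagonals, e.g.\ short chords $p_jp_{j+2}$ that separate the outside atoms, away from $F$.
\item The $a$-free shores of these three diagonals have outside-atom sets $\{a_0,a_1,a_2\}$, $\{a_2,\dots,a_{n-3}\}$ and $\{a_{n-3},a_{n-2},a_{n-1}\}$.
\item This is a minimum cover. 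A cover of size $2$ never exists, since a positive-area intersection of two half-planes always contains a polygon side.
\item Yet the first and last of these cuts do not cross.
\end{itemize}

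\emph{A cleverer choice of family does not help either.} Your own winding argument shows the following: for a $k$-cycle with $k\ge 4$, every polygon vertex lies strictly on the $C_i$-side of some $D_i$. Here, however, $\overline{F}\ni p_0$ lies on the closed $a$-side of every diagonal. So no $k$-cycle avoids $a$ at all. Nothing in the three facts you use, or in properties 1--4 of $P$, forbids an atom in such a cell. Hence $(\Rightarrow)$ cannot be derived from them; it needs the specific construction of \cite{BG08}.

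\emph{This is how the paper treats the statement.} It does not prove the lemma. In \cite{BG08}, inside atoms are \emph{defined} by the $k$-cycle property (their Definition~4), and their representation theorem places exactly these atoms in cells not bounded by the polygon boundary. So your closing remark is the actual route. The minimality argument should be dropped rather than patched.

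\textbf{A smaller issue in $(\Leftarrow)$.} For $k=3$ you appeal to ``each nonempty intersection of shores contains an outside atom.'' This is false: inside atoms lie exactly in intersections of shores that contain no outside atom, and $F$ itself is one. Consequently the extra condition $(C_i\cap C_{i+1})\not\subseteq C_{i-1}$ could be witnessed by an inside atom in the middle triangle of three pairwise crossing diagonals whose arcs do not wind around the polygon. In the present setting this case is vacuous anyway: \Cref{lem:no-kcycle} with $\eta\le\onefifth$ gives $k\ge 5$.
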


In \cite{KKO21b}, the following lemmas were shown. The following lemma shows that cells of $P$ that can be described with few diagonals and do not intersect any side of $P$ are empty. 

\begin{lemma}[{\cite[Lemma 4.30]{KKO21b}}]\label{thm:halfplanes}
Let $\eta \leq  \onefifth$ and let $P$ be the polygon representation for a connected component $|{\cal C}|>1$ of $\eta$-NMCs of a graph. Suppose $H$ is the intersection of half-planes $H_0,\dots,H_{\ell-1}$ corresponding to diagonals $D_0,\dots,D_{\ell-1}$ of $P$ that has a positive area. If $H$ does not contain any side of $P$ (equivalently, it does not contain any outside atom) and $\ell<1/(2\eta)$ then $H$ does not have any inside atoms.
\end{lemma}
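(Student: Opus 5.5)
We argue by contradiction using an edge-counting (uncrossing-style) argument, in the spirit of the proof of \cref{lem:no-kcycle}. Suppose $H=\bigcap_{i} H_i$ has positive area, contains no side of $P$, and yet contains an inside atom $a$. First we make the family of diagonals minimal. Discard any $D_i$ whose removal leaves the intersection unchanged within the polygon, and likewise any whose removal still yields a region containing no side of $P$ and containing $a$. This only decreases $\ell$, so we may assume every remaining diagonal contributes an edge of the boundary of $H$, and that $\ell\ge 3$ (two half-planes cannot enclose a region avoiding all sides). Order the diagonals $D_0,\dots,D_{\ell-1}$ cyclically along the boundary of the convex region $H$.

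For each $i$, let $T_i$ be the shore of the cut of $D_i$ lying on the side of $D_i$ \emph{opposite} to $H$. Each $T_i$ is an $\eta$-NMC, since the cuts of the representing diagonals are exactly $\mathcal{C}$. Since $H$ contains no side of $P$, the sets $O_P(T_i)$ cover all outside atoms. Consecutive diagonals $D_i,D_{i+1}$ meet at a corner of $H$ inside the polygon, so they cross, and $T_i\cap T_{i+1}$ contains the outside atoms near the polygon boundary between them. By minimality, each $T_i$ also contains outside atoms not in $T_{i-1}\cup T_{i+1}$: otherwise $D_i$ would be redundant. Finally $U:=V\setminus\bigcup_i T_i\supseteq a$ is nonempty.

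Now partition $V$ according to the membership pattern of each vertex in $T_0,\dots,T_{\ell-1}$. By the above, this partition has at least $2\ell+1$ nonempty parts:
\begin{itemize}
\item the $\ell$ ``private'' parts $T_i\setminus\bigcup_{j\ne i}T_j$;
\item the $\ell$ pairwise overlaps $W_i=T_i\cap T_{i+1}$ (or the pattern-classes containing the boundary atoms there);
\item the part $U$.
\end{itemize}
Any further pattern-classes only help. Each part is a nonempty proper subset of $V$, so $|\delta(\text{part})|\ge k$. Every edge joining two different parts joins vertices with different membership patterns, so it lies in $\delta(T_i)$ for at least one $i$. Such an edge contributes exactly $2$ to $\sum_{\text{parts}}|\delta(\text{part})|$ and at least $1$ to $\sum_i|\delta(T_i)|$. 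Hence
\[
\ell(1+\eta)k>\sum_{i=0}^{\ell-1}|\delta(T_i)|\ \ge\ \tfrac12\sum_{\text{parts}}|\delta(\text{part})|\ \ge\ \tfrac12(2\ell+1)k = \ell k+\tfrac k2 ,
\]
which gives $\ell\eta>\tfrac12$, contradicting $\ell<1/(2\eta)$.

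The main obstacle is the combinatorial-geometric step: justifying rigorously, from the polygon structure, that after minimalization all $2\ell+1$ claimed parts are nonempty. In particular we must check that each private region and each consecutive overlap really contains an outside atom, using the convexity of $H$, the fact that consecutive boundary diagonals of $H$ cross, and property~3 of the representation (no cell has two polygon sides). The minimalization must also be done carefully so that $a$ stays in the region and $H$ still contains no side of $P$. The counting itself is routine once the parts are in place.
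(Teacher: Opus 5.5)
\textbf{There is a genuine gap.} The argument fails at the step ``consecutive diagonals $D_i,D_{i+1}$ meet at a corner of $H$ inside the polygon, so they cross.'' Nothing in the hypotheses prevents $H$ from having a corner at a \emph{vertex} of $P$, where $D_i$ and $D_{i+1}$ share an endpoint.

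For example, with $n=16$ take $D_0=(p_0,p_5)$, $D_1=(p_3,p_{12})$, $D_2=(p_0,p_{10})$. The triangle they bound has positive area, has a corner at $p_0$, and contains no side. It is also minimal in your sense: dropping $D_0$, $D_1$, $D_2$ exposes $a_0,a_1,a_2$, resp.\ $a_5,\dots,a_9$, resp.\ $a_{12},\dots,a_{15}$.

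Such corners are not exotic. Your minimalization can itself create them, since it discards any chord that only cuts off a polygon vertex. Chords sharing an endpoint are also common in the way the paper applies the lemma. At such a corner $T_i\cap T_{i+1}=\emptyset$. With $m$ vertex corners you are guaranteed only $2\ell-m+1$ classes, and your chain then yields $\ell\eta>(1-m)/2$, which is vacuous.

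\textbf{A sharper count will not fix this.} In the example the classes are $Q_1=\{a_0,a_1,a_2\}$, $Q_2=\{a_3,a_4\}$, $Q_3=\{a_5,\dots,a_9\}$, $Q_4=\{a_{10},a_{11}\}$, $Q_5=\{a_{12},\dots,a_{15}\}$ and $U$. For even $k$, take one vertex per class and form the cycle $Q_1Q_2Q_3Q_4Q_5U$ with $k/2$ parallel edges on each link. This graph is $k$-edge-connected and all six classes are nonempty. Moreover $T_0=Q_1\cup Q_2$, $T_1=Q_2\cup Q_3\cup Q_4$, $T_2=Q_4\cup Q_5$ are arcs of the cycle, hence \emph{exact} minimum cuts. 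So every inequality obtainable from cut values of unions of membership classes holds with $\eta=0$ and $U\neq\emptyset$. What rules this configuration out is polygon information your argument never uses: in this graph $U$ would be an outside atom at $p_0$.

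\textbf{The non-degenerate case is fine.} When every corner of $H$ is a proper crossing, one can check that after minimalization the boundary membership patterns alternate $\{i\},\{i,i+1\}$. Then your $T_i$ form an $\ell$-cycle in the sense of \cref{def:kcycle}, and \cref{lem:no-kcycle} already gives $\ell\ge 1/\eta$. The factor $2$ in the hypothesis $\ell<1/(2\eta)$ is presumably slack meant to pay for vertex corners. One possibility is to cut off each vertex corner with one additional cut of $\mathcal C$, using something like \cref{lem:inside-implies-cycle}, and then apply \cref{lem:no-kcycle} to a cycle of at most $2\ell$ cuts. The paper itself does not prove this lemma (it is imported from \cite{KKO21b}), so this ingredient has to be supplied by your proof.
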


It is clear that in any polygon representing a connected component of cuts $\cC$, for any $S,T \in \cC$, $S = T$ if and only if $O(S) = O(T)$ and similarly $S,T$ cross if and only if $O(S),O(T)$ cross. The following two lemmas extend this notion to all $\onefifth$-NMC, even in polygons which were constructed for smaller values of $\eta$. 

\begin{lemma}[{\cite[Lemma 4.23]{KKO21b}}]\label{lem:extendedprop20}
Let $P$ be the polygon representation for a connected component $|{\cal C}|>1$ of $\eta$-NMCs of a graph $G$ with atom set $\cA(\cC)$. If $A,B\subsetneq \cA(\cC)$ are two $\onefifth$-NMCs with $O_P(A) = O_P(B)\ne \emptyset$ and there is an atom $r\in \cA(\cC)$ such that $r\notin A,B$, then $A=B$.
\end{lemma}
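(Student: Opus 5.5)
The plan is to argue by contradiction. Suppose $A\neq B$. By symmetry we may assume $A\setminus B\neq\emptyset$. Since $O_P(A)=O_P(B)$, every atom of $A\setminus B$ (and of $B\setminus A$) is an inside atom.

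\textbf{Step 1: the relative position of $A$ and $B$.} $A\cap B\supseteq O_P(A)\neq\emptyset$, and $r\notin A\cup B$, so $A\cup B\neq V$. Hence $A$ and $B$ are either nested ($B\subsetneq A$) or crossing. In both cases I would isolate a nonempty set $X$ consisting only of inside atoms with small boundary:
\begin{itemize}
\item If $A$ and $B$ cross, apply \cref{lem:cutdecrement} with $\eta_A=\eta_B=\onefifth$. This gives that $X=A\setminus B$ is a $\tfrac25$-NMC, so $|\delta(X)|<\tfrac75 k$.
\item If $B\subsetneq A$, set $X=A\setminus B$ and use $|\delta(X)|\le|\delta(A)|+|\delta(B)|$. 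This bound is weaker, so here I would rather use that $B$ and $A$ agree on all outside atoms, and compare edges of $\delta(A)$ and $\delta(B)$ incident to $X$.
\end{itemize}

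\textbf{Step 2: use the cycle witnessing an inside atom.} Fix an inside atom $a\in X$. By \cref{lem:inside-implies-cycle} there is a $k'$-cycle $C_1,\dots,C_{k'}\in\cC$ with $a\notin C_i$ for all $i$. By \cref{lem:no-kcycle}, $k'\ge 1/\eta\ge 5$. The cycle sets are diagonal cuts, so their outside atoms form consecutive, overlapping intervals covering the whole boundary of $P$. Let $U=\bigcup_i C_i$. Then every outside atom lies in $U$, while $a\notin U$.

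The standard counting for $k$-cycles runs as follows. Consecutive $C_i$ overlap, nonconsecutive ones are disjoint, and each $|\delta(C_i)|<(1+\eta)k$. Summing $|\delta(C_i)|$ and charging edges shows that very few edges, roughly $O(\eta k')\cdot k$ in total but distributed, can go from $V\setminus U$ to $U$ through any single "gap". I would then intersect this with $A$ and $B$. Since $O_P(A)$ is a contiguous family of outside atoms and $r\notin A\cup B$ sits at another place of the boundary, $A$ and $B$ each cross only $O(1)$ of the cycle sets nontrivially. Applying \cref{lem:cutdecrement} (or posimodularity) to $X$ and the few $C_i$ that meet $A$ should show that the edges of $\delta(X)$ landing inside $U$ number at most about $2\cdot\tfrac15 k$, and the edges landing in $V\setminus U$ outside $X$ are similarly bounded. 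Together this gives $|\delta(X)|<k$, contradicting $k$-edge-connectivity.

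\textbf{Main obstacle.} The main obstacle is the bookkeeping in Step 2: making the constant $\onefifth$ suffice even though the polygon was built for a possibly smaller $\eta$. I would first try the cleanest version of this argument. Namely, replace $X$ by $X'=X\setminus U$, which still contains $a$. Bound $|\delta(X')|$ by adding $|\delta(A)|+|\delta(B)|$ to $|\delta(C_i)|$ for the two or three cycle sets adjacent to the interval $O_P(A)$. Each pair of these cuts crosses or nests, so applying \cref{lem:cutdecrement} pairwise keeps the error terms to a few multiples of $\onefifth$. The nested case $B\subsetneq A$ would be handled the same way after noting that $\delta(A)\cap\delta(B)$ contains all edges between $A\cap B$ and the outside atoms not in $A$.
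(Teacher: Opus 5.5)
\textbf{There is a genuine gap in Step 2.} The paper does not prove this lemma; it imports it from \cite{KKO21b}. So your argument has to stand on its own, and it does not. Step~1 is correct. Step~2, however, contains no derivation of $|\delta(X)|<k$, and the mechanism you propose cannot produce one.

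\emph{Pairwise uncrossing only gives bounds above $k$.} Applying \cref{lem:cutdecrement}, submodularity or posimodularity pairwise always trades two cuts for two cuts. The only lower bound available for the discarded cut is $k$. So any chain of such steps starting from $A$, $B$ and a few $C_i$ yields a bound $|\delta(X')|<(1+\epsilon)k$, where $\epsilon$ is the sum of the excesses used. It never gives a bound below $k$.

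\emph{Other unsupported steps.}
\begin{itemize}
\item The $k$-cycle counting you invoke only gives the global bound $|\delta(V\setminus U)|<k'\eta k$. This is at least $k$ and says nothing about any single ``gap''.
\item Nothing in your sketch controls the edges from $X$ to the rest of $V\setminus U$.
\item The nested case $B\subsetneq A$ is left open. It is really the general case: replacing $A,B$ by $Y=A\cap B\subsetneq Z=A\cup B$ keeps the outside atoms and $r$, and gives $|\delta(Y)|+|\delta(Z)|<\tfrac{12}{5}k$.
\item You assume without proof that $O_P(A)$ is contiguous and that $A$ crosses only $O(1)$ cycle sets. The latter is a statement about which inside atoms $A$ contains, i.e.\ of the same nature as the lemma itself.
\end{itemize}

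\textbf{What is needed.} To contradict $k$-edge-connectivity you need an inequality that pits the near-minimum cuts against strictly \emph{more} cuts of value at least $k$. The cycle counting
\[
\sum_i|\delta(C_i)|\ \ge\ \sum_i|\delta(C_i\cap C_{i+1})|+\Bigl|\delta\bigl(V\setminus\textstyle\bigcup_i C_i\bigr)\Bigr|
\]
provides exactly this, and its proof works for any $m$-cycle of sets with $m\ge 4$, not only for cycles in $\cC$.

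For instance, take an inside atom $u\in Z\setminus Y$ and a cycle from $\cC$ avoiding $u$. Suppose two non-consecutive (hence disjoint) members $C_a,C_b$ each contain outside atoms both in $O_P(A)$ and outside it, and neither contains all of $O_P(A)$ nor all of its complement. Then $(Y,C_a,V\setminus Z,C_b)$ is a $4$-cycle avoiding $u$. Using $\eta\le\onefifth$, the counting gives $5k\le|\delta(Y)|+|\delta(Z)|+|\delta(C_a)|+|\delta(C_b)|<\tfrac{12}{5}k+2(1+\eta)k\le\tfrac{24}{5}k$, a contradiction.

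What remains, and what your proposal lacks, is handling the configurations where no such pair exists. Examples are when $O_P(A)$ or its complement lies within one or two consecutive cycle sets, or when $O_P(A)$ is all outside atoms, so that $r$ and $V\setminus Z$ consist only of inside atoms. That case analysis, not error bookkeeping, is the real content of the lemma.
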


\begin{lemma}[{\cite[Lemma 4.24]{KKO21b}}]\label{lem:canonicalcrossing}
Let $P$ be the polygon representation for a connected component $|{\cal C}|>1$ of $\eta$-NMCs of a graph $G$ with atom set $\cA(\cC)$. If $A,B\subsetneq \cA(\cC)$ are two $\onefifth$-NMCs with $O_P(A) = O_P(B)\ne \emptyset$. Then, $A$ and $B$ cross if and only if $O(A)$ and $O(B)$ cross.
\end{lemma}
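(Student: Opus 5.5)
Throughout, ``$O(A)$ and $O(B)$ cross'' means crossing as subsets of the outside atoms $\{a_0,\dots,a_{m-1}\}$, so complements are taken inside this set. As literally written, the hypothesis $O_P(A)=O_P(B)$ would make the claim trivial, since a set never crosses itself. I will therefore prove the version that is evidently intended and used later: for any two $\onefifth$-NMCs $A,B\subsetneq \cA(\cC)$ whose outside-atom sets are nonempty and proper, $A$ and $B$ cross if and only if $O(A)$ and $O(B)$ cross. The plan is to prove the two directions separately. The forward direction is routine; the reverse direction needs a lower bound on cuts that contain no outside atom.

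\emph{$O(A),O(B)$ cross $\Rightarrow$ $A,B$ cross.} Each of $O(A)\cap O(B)$, $O(A)\setminus O(B)$, $O(B)\setminus O(A)$ and the set of outside atoms in neither is nonempty. Each such outside atom lies in the corresponding region among $A\cap B$, $A\setminus B$, $B\setminus A$ and $V\setminus(A\cup B)$. Hence all four regions are nonempty, so $A$ and $B$ cross.

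\emph{$A,B$ cross $\Rightarrow$ $O(A),O(B)$ cross.} Suppose for contradiction that $A,B$ cross but $O(A),O(B)$ do not. Then one of the four outside-atom regions is empty. Replacing $A$ or $B$ by its complement, which is again a $\onefifth$-NMC, does not change whether the cuts cross, so I may assume $O(A)\cap O(B)=\emptyset$. Then $X:=A\cap B$ is nonempty, because the cuts cross, but it consists only of inside atoms. By submodularity, $|\delta(A)|+|\delta(B)|\ge|\delta(A\cap B)|+|\delta(A\cup B)|$. Since $|\delta(A\cup B)|\ge k$, this gives $|\delta(X)|<(1+\tfrac25)k$, matching \cref{lem:cutdecrement}.

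The main obstacle is to show that a nonempty set $X$ of inside atoms cannot have $|\delta(X)|<\tfrac75k$. I would try to do this with the polygon geometry.
\begin{itemize}
\item By \cref{lem:inside-implies-cycle}, every atom of $X$ lies outside some $\ell$-cycle $C_1,\dots,C_\ell$ of cuts in $\cC$, and $\ell\ge 1/\eta\ge 5$ by \cref{lem:no-kcycle}.
\item The consecutive intersections $C_i\cap C_{i+1}$ are disjoint, and each has cut size less than $(1+2\eta)k$ by \cref{lem:cutdecrement}.
\item The edges leaving $X$ must reach these pieces or the region outside the cycle.
\item Summing $|\delta(C_i)|<(1+\eta)k$ over $i$ and comparing with a lower bound of $k$ on each of the disjoint pieces should show that only about $\eta\ell k$ ``excess'' edges are available to serve all of the inside region.
\item Combining this with $\ell\ge1/\eta$, and using that $X$ avoids the half-plane regions excluded by \cref{thm:halfplanes}, I expect $|\delta(X)|\ge \tfrac75k$. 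This contradicts the submodularity bound above.
\end{itemize}

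If this direct counting turns out too lossy, my fallback is to avoid bounding $\delta(X)$ and instead find a single cut to which \cref{lem:extendedprop20} applies. The natural candidate is $A'=A\setminus X$, which satisfies $O(A')=O(A)\neq\emptyset$. One would first show that $A'$ is still a $\onefifth$-NMC, by exhibiting a diagonal cut of $\cC$ with outside-atom set $O(A)$ that separates $X$ from $A$. One would also need an atom $r\notin A\cup A'$. \Cref{lem:extendedprop20} would then force $A=A'$, contradicting $X\neq\emptyset$.
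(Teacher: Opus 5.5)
There is a genuine gap at the step you call the main obstacle. Your claim is that a nonempty set $X$ of inside atoms must satisfy $|\delta(X)|\ge\tfrac75k$. This is false, so no version of the cycle counting can establish it. The paper's own example in \cref{fig:polygonrepresentation} refutes it. There $k=7$, and the polygon represents the cuts with at most $8$ edges, which is a valid choice with $\eta=\onefifth$. The inside atom $15$ has $|\delta(\{15\})|=7=k$. Your counting heuristic also points the wrong way: an excess of about $\eta\ell k$ edges is an \emph{upper} bound on what can reach the inside region, not a lower bound. The underlying mistake is that after applying submodularity you discard $|\delta(A\cup B)|$ and try to get a contradiction from $A\cap B$ alone, which cannot work. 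Your fallback aims at the right target: show that $A\setminus X$ is a $\onefifth$-NMC, then invoke \cref{lem:extendedprop20}. The route you suggest for it cannot work, though. Any $D\in\cC$ with $O(D)=O(A)$ is a $\onefifth$-NMC that misses an outside atom not in $O(A)$. So \cref{lem:extendedprop20} forces $D=A\supseteq X$, and such a $D$ never separates $X$ from $A$.

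The missing idea is to uncross toward sets that keep the outside atoms of $A$ or $B$, and let \cref{lem:extendedprop20} finish. In your normalization $O(A)\cap O(B)=\emptyset$, posimodularity gives $|\delta(A\setminus B)|+|\delta(B\setminus A)|\le|\delta(A)|+|\delta(B)|<2(1+\tfrac15)k$. So one of these sets, say $A\setminus B$, is itself a $\onefifth$-NMC. It satisfies $O(A\setminus B)=O(A)\neq\emptyset$. Because $A$ and $B$ cross, there is an atom $r\in V\setminus(A\cup B)$, lying in neither $A$ nor $A\setminus B$. Then \cref{lem:extendedprop20} gives $A\setminus B=A$, i.e.\ $A\cap B=\emptyset$, a contradiction; the case of $B\setminus A$ is symmetric. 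Without complementing, the argument reads: if $O(A)\subseteq O(B)$, one of $A\cap B$ and $A\cup B$ is a $\onefifth$-NMC with the same outside atoms as $A$, respectively $B$, and the same lemma forces $A\subseteq B$. This needs only $O(A),O(B)\neq\emptyset$. It also shows that your remark about the literal statement is mistaken. When $O(A)=O(B)\neq\emptyset$, the statement asserts that $A$ and $B$ do not cross, which is a genuine claim rather than a triviality, and it follows from the argument just given.
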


\section{Constructing the Laminar Family}\label{sec:outside_atom_case}
Say that a collection of cuts $\Cover$ \emph{covers} an $\eta$-NMC $S$ if $\delta(S) \subseteq \cup_{B \in \Cover} \delta(B)$. 
For $\eta = \onefortieth$, we will show how to construct a laminar family $\cL$ of $4\eta$-NMCs such that every $\eta$-NMC $S$ is covered by a set $\Cover(S) \subseteq \cL$ of constant size.
This means that any spanning tree $T$ and $\eta$-NMC $S$, 
$$|\delta(S) \cap T| \leq \sum_{B \in \Cover(S)} |\delta(B) \cap T| \le 11 \cdot |B| \le O(1),$$
where the second inequality follows from \Cref{thm:strong-tree-laminar} (which we may apply as $4 \eta < \onefifth$). Hence, the existence of such $\cL$ implies \Cref{thm:main-result}. 

We will prove a version of this coverage statement for each connected component of $\eta$-NMCs, with the help of the polygon representation.
In \Cref{subsec:full-laminar-family} we will see that the laminar families for each connected component can be easily combined to construct the overall laminar family.

The coverage theorem for a single nontrivial connected component is the following.
\begin{theorem}\label{thm:L-covers-nmc-general}
    Let $\eta = \onefortieth$. Given a connected component $\cC$ of $\eta$-NMC with $|\cC| > 1$, there exists a laminar family $\cL$ such that every $\eta$-NMC in $\cC$ is covered by a collection of at most $8$ cuts in $\cL$. This collection $\cL$ can be computed in polynomial time given $\cC$, and for each atom $a \in \cA(\cC)$ and every cut $S \in \cL$, we have $a \subseteq S$ or $a \cap S = \emptyset$.  
\end{theorem}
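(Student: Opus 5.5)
We work entirely inside the polygon representation $P$ of $\cC$, with outside atoms $a_0,\dots,a_{m-1}$ in cyclic order. Every $S\in\cC$ corresponds to a representing diagonal, so $O(S)$ is a cyclic interval $[a_i,a_j]$ of outside atoms. Each such $S$ is also determined by $O(S)$ together with the inside atoms on its side. The plan is to cover $S$ first through its outside part, and then to handle the inside atoms separately.

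\textbf{Outside atoms.} We fix the atom $a_0$ as a root. We then build a hierarchical, dyadic-style laminar family of outside intervals not containing $a_0$, in the spirit of a segment tree over $a_1,\dots,a_{m-1}$. Only intervals that are realised (or nearly realised) by $4\eta$-NMCs are allowed. For each interval $I$ we take the cut $U_I$ consisting of the atoms of $I$ together with the inside atoms lying in the region cut off by the corresponding diagonal(s). \Cref{lem:cutdecrement} shows that unions and intersections of two crossing $\eta$-NMCs are $2\eta$-NMCs, and differences of these are $4\eta$-NMCs. This is where the budget $4\eta<\onefifth$ comes from.

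\textbf{Covering one cut.} Consider an $\eta$-NMC $S\in\cC$ with $O(S)=[a_i,a_j]$, where we choose the shore of $S$ avoiding $a_0$. We aim to write $\delta(S)$ as a subset of the union of $\delta$ of a constant number of members of $\cL$. If $[a_i,a_j]$ is a union of a few "left-anchored" and "right-anchored" chain pieces meeting at a common ancestor, then every edge of $\delta(S)$ goes between two atoms separated by one of those pieces. So a constant number of pieces, roughly two per endpoint plus a few for the cell structure, suffices.

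To keep the count constant rather than logarithmic, we will not use a balanced segment tree. Instead we use the special structure of NMCs, namely that crossing NMCs in a component form near-cyclic chains. Concretely, we choose $\cL$ as a maximal laminar family built greedily: process diagonals by a fixed rule, for instance always preferring the diagonal nearest to $a_0$ on each side. We then argue that any $S$ crosses members of $\cL$ in a controlled pattern. The boundary of $S$ should be witnessed by at most two chains on each end, and the uncrossing lemma lets us replace crossing pieces by their intersections and differences, which are still in the NMC range.

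\textbf{Inside atoms and consistency.} \Cref{lem:extendedprop20} and \Cref{lem:canonicalcrossing} guarantee that the constructed $4\eta$-NMCs are determined by their outside atoms and cross exactly when their outside intervals cross. Laminarity of $\cL$ therefore follows from laminarity of the outside intervals. Inside atoms are handled using \Cref{thm:halfplanes}: with $\ell<1/(2\eta)=20$ diagonals, any region of $P$ cut out by our few covering diagonals and containing no side of $P$ has no inside atoms. So edges of $\delta(S)$ incident to inside atoms are already separated by the covering cuts. Each member of $\cL$ is a union of atoms by construction, which gives the atom condition in the theorem. Polynomial time follows because all $\eta$-NMCs (polynomially many, by Karger--Stein) and the polygon can be computed efficiently.

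The main obstacle is the covering step: showing that the laminar family can be chosen so that every interval $O(S)$ is covered by at most $8$ members. This needs to avoid the failure shown in \cref{fig:badexample}, where an arbitrary maximal laminar family fails. I would first try a canonical choice, taking for each $S$ the laminar cuts $S\cap L$ and $S\setminus L$ where $L$ is the largest member of $\cL$ crossing $S$. I would then bound the recursion depth by $O(1)$ using \Cref{lem:no-kcycle}, since long crossing chains inside the component are forced to be cycle-like and hence of length $\ge 1/\eta$, contradicting near-minimality.
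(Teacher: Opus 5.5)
\paragraph{There is a genuine gap: the construction of $\cL$ and the $8$-cut covering argument are missing.}
Your outer shell matches the paper. Laminarity of $\cL$ reduces to laminarity of the outside intervals via \Cref{fact:canonicalcrossing}, the atom condition holds for canonical cuts, and inside atoms are handled by \Cref{thm:halfplanes}. But the substance of the theorem is the construction of $\cL$ and the proof that every $\eta$-NMC is covered by at most $8$ of its members, and you give neither.

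Your ``maximal laminar family built greedily, preferring diagonals nearest to $a_0$'' is neither specified nor analyzed. Its natural reading is the chain of root-anchored prefixes $\cut{1}{r}$ that happen to be near minimum. That chain leaves uncovered the edges of $\delta(S)$ at both ends of $\out{l_S}{r_S}$, beyond the extreme prefixes $\out{1}{i}$ and $\out{1}{j}$ that end inside $S$. The missing pieces $\out{l_S}{i}$ and $\out{j}{r_S}$ would have to enter the laminar family from opposite directions, and nothing in a one-directional greedy rule puts them in $\cL$.

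The inside-atom step also cannot be checked without a construction. It needs each covering cut to be cut out by at most $2$ diagonals (at most $4$ for one of them), and at most $8$ covering cuts, so that $1+2\cdot 7+4=19<20$.

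\paragraph{Your proposed way of keeping the count constant does not work.}
\Cref{lem:no-kcycle} constrains only closed $k$-cycles, and through \Cref{lem:inside-implies-cycle} it only governs whether inside atoms exist. It says nothing about the outside intervals, which can be arbitrarily rich without any $k$-cycle. For example, in a cycle with many parallel edges between consecutive vertices, every interval is a minimum cut: there are crossing chains of every length and no inside atoms. So no $O(1)$ depth bound follows. The paper's recursion does not have bounded depth either; what is bounded is how many levels below the deepest active interval containing $S$ are needed.

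Iterating ``replace $S$ by $S\cap L$ and $S\setminus L$'' fails for a related reason. Each step adds to the error in \Cref{lem:cutdecrement}, so you soon leave the $\onefifth$ regime where canonical cuts and the polygon lemmas apply.

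\paragraph{The missing idea.}
You need a class of cuts that is stable under exactly the operation you use. The paper's special intervals are NMIs, or unions, intersections or differences of two crossing NMIs, hence $2\eta$-NMCs. The key tool is the dichotomy \Cref{lem:special-cross-nmc}: if $A$ is special and crosses an NMI $B$, then one of $A\cap B$, $A\setminus B$ is special, and one of $A\cup B$, $B\setminus A$ is special.

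\Cref{alg:construct-laminar} takes maximal chains of special prefixes, then of special suffixes, alternating direction with depth. The covering argument then runs as follows.
\begin{itemize}
\item Let $\out{l_1}{r_1}$ be the deepest active interval containing $\out{l_S}{r_S}$. The extremal special prefixes $\out{l_1}{i}$ and $\out{l_1}{j}$, with $l_S\le i\le j\le r_S$, are in $\cL$.
\item Extremality together with the dichotomy forces $\out{l_S}{i}$ and $\out{j}{r_S}$ to be special.
\item The alternation of direction makes $\out{l_S}{i}$ a suffix that is added at the next depth.
\item The right end is covered within two further levels by three special suffixes, one difference of crossing special cuts, and one more special cut.
\end{itemize}
This gives $8$ cuts in total.
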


In \Cref{subsec:algorithm}, we describe our algorithm for constructing the laminar family.
In \Cref{subsec:outside-atoms}, we show a weaker coverage statement that involves only outside atoms.
Then in \Cref{subsec:general-case} we use  properties of the polygon representation described in \cref{sec:prelim} to show that coverage holds fully.
We find it slightly surprising that for the bulk of our analysis, we can eschew the complexity of the full polygon representation and focus only on the outside atoms, which have a simpler structure. 
\nnote{Not sure if this move was fully successful.}



For the remainder, fix $\eta = \onefortieth$, a connected component $\cC$ of $\eta$-NMCs with $|\cC| > 1$, and its polygon representation $P$. 
We let $\cA$ denote the atoms of $P$.
Let the vertices of $P$ be $p_0,p_1,\dots,p_{n-1}$, and the outside atoms of $P$ be $a_0,a_1,\dots,a_{n-1}$, in counterclockwise order such that $a_i$ is the only outside atom in one side of the cut with representing diagonal $(p_i,p_{i+1})$, where indices are taken modulo $n$. 
We will refer to $a_0$ as the root atom; $a_n$ will be another name for $a_0$.

\begin{definition}[Intervals]
Let $l,r$ be such that $1 \le l < r \le n$. 
We define $\out{l}{r}$ to be the set of outside atoms $\{ a_l, a_{l+1}, \ldots, a_{r-1}\}$; we will refer to such a set as an \emph{interval}. 
\end{definition}
Note that $\out{l}{r}$ includes $a_l$ but not $a_r$; it can be thought of as the set of outside atoms between $p_l$ and $p_r$ on the polygon.

\nnote{Tweaked the following.}
By \Cref{lem:extendedprop20}, no two distinct $\onefifth$-NMCs have the same set of outside atoms.
Thus the following is well-defined.
%

\begin{definition}[Shadows and near minimum intervals]\label{def:canonical-cut}
    Call an interval $\out{l}{r}$ a \emph{shadow} if there exists a $\onefifth$-NMC $S$ with $O(S) = \out{l}{r}$, and in this case, define $\cut{l}{r} = S$. 
    We refer to $\cut{l}{r}$ as the \emph{canonical cut} associated with the interval $\out{l}{r}$.
   
    If in addition $\cut{l}{r}$ is an $\eta$-NMC, we call $\out{l}{r}$ a \emph{near minimum interval}, or NMI for short.
%
%
\end{definition}

The following is an immediate consequence of \Cref{lem:canonicalcrossing}.
\begin{corollary}\label{fact:canonicalcrossing}
If $\cut{l}{r}$ and $\cut{l'}{r'}$ are crossing canonical cuts, then $\out{l}{r}$ and $\out{l'}{r'}$ are crossing intervals.
\end{corollary}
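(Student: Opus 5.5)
This is a direct application of \Cref{lem:canonicalcrossing}; the only work is checking that its hypotheses hold for canonical cuts.

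Let $A = \cut{l}{r}$ and $B = \cut{l'}{r'}$ be crossing canonical cuts. By \Cref{def:canonical-cut}, both are $\onefifth$-NMCs, and their sets of outside atoms are $O(A) = \out{l}{r}$ and $O(B) = \out{l'}{r'}$.

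\textbf{Step 1: verify the hypotheses.} By the definition of a cut, $A$ and $B$ are proper subsets of the vertex set. By the last clause of \Cref{thm:L-covers-nmc-general}'s setting, every $\onefifth$-NMC relevant here respects the atoms of $\cA(\cC)$. So $A$ and $B$ may be regarded as proper subsets of $\cA(\cC)$.

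Their outside atom sets are nonempty, because an interval $\out{l}{r}$ with $l<r$ contains $a_l$.

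The lemma's statement as transcribed reads ``$O_P(A)=O_P(B)\ne\emptyset$''. Read literally, this equality hypothesis would make the conclusion trivial, since equal sets never cross. So it is clearly a typographical slip for ``$O_P(A), O_P(B) \neq \emptyset$''. I would note this and apply the lemma under that reading.

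\textbf{Step 2: apply the lemma.} \Cref{lem:canonicalcrossing} then says that $A$ and $B$ cross if and only if $O(A)$ and $O(B)$ cross. Since $A$ and $B$ cross by assumption, $\out{l}{r}$ and $\out{l'}{r'}$ cross as sets of outside atoms.

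\textbf{Step 3: match the notion of crossing.} Crossing of intervals means crossing as subsets of the ground set of outside atoms $\{a_0,\dots,a_{n-1}\}$: the intersection, both differences, and the complement of the union are all nonempty. This is exactly what the lemma delivers.

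\textbf{Main obstacle.} The only subtle point is the hypothesis check in Step 1, namely that $A$ and $B$ are unions of atoms, so the lemma applies. If that is not available, I would fall back on the convention used throughout this section that cuts are identified with sets of atoms. No further computation is needed.
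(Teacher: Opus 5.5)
Your proof is correct and matches the paper, which simply calls this corollary an immediate consequence of \Cref{lem:canonicalcrossing}; your reading of that lemma's hypothesis as $O_P(A), O_P(B) \neq \emptyset$ (instead of the mis-transcribed equality) is the right one. One correction: the fact that canonical cuts are unions of atoms of $\cA(\cC)$ should not be taken from the last clause of \Cref{thm:L-covers-nmc-general}. That clause is a conclusion about $\cL$ whose proof itself uses this corollary, so citing it is circular. The correct justification is your fallback, the convention built into \Cref{def:canonical-cut}, whose well-definedness already relies on \Cref{lem:extendedprop20} and so only concerns sets $A,B \subsetneq \cA(\cC)$.
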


\subsection{Algorithm}\label{subsec:algorithm}

In this subsection, we describe a recursive top-down procedure to construct $\cL$, the desired laminar family. Before describing the algorithm formally, we start by giving some intuition for the algorithm and an overview of its general steps. For this intuition, assume that there is a single polygon and there are no inside atoms.

If the set of near minimum intervals is already laminar, we are done. At the other extreme, if all intervals $\out{l}{r}$ are near minimum, then we can set $\cL = \{\cut{1}{2},\cut{1}{3},\dots,\cut{1}{n}\}$ because for all $l,r$, $\cut{l}{r}$ is covered by $\cut{1}{r}$ and $\cut{1}{l}$. In fact, $\cL = \{\cut{1}{2},\cut{1}{3},\dots,\cut{1}{n}\}$ would always work if $\out{ 1}{r}$ is the shadow of a $4\eta$-NMC for all $r$. Unfortunately, this is not always the case.

Instead, suppose we greedily add all cuts of the form $\cut{1}{r}$ that are $4\eta$-NMCs to $\cL$. Let $S = \cut{l_S}{r_S}$ be an $\eta$-NMC. Let 
$$i = \min\{x: l_S \le x \le r_S, \langle 1,x\rangle\text{ is a shadow of a }4\eta\text{-NMC}\}$$
$$j = \max\{x: l_S \le x \le r_S, \langle 1,x\rangle\text{ is a shadow of a }4\eta\text{-NMC}\}.$$
Suppose $i$ and $j$ exists. Then, $\cut{1}{i}$ and $\cut{1}{j}$ covers all edges of $\delta(S)$ except the edges between $\out{1}{l_S}$ and $\out{l_S}{i}$ and the edges between $\out{j}{r_S}$ and $\out{r_S}{n}$. If we could add $\cut{l_S}{i}$ and $\cut{j}{r_S}$ (assuming they exist and are $4\eta$-NMCs), then we can cover all edges of $\delta(S)$. We can almost do that: since $\cut{1}{i}$ is a $4\eta$-NMC and $\cut{l_S}{r_S}$ is an $\eta$-NMC, by \Cref{lem:cutdecrement}, $\cut{l_S}{i}$ exists and is a $5\eta$-NMC. In the actual algorithm, instead of greedily taking all prefix $4\eta$-NMCs, we will only take a subset of $2\eta$-NMCs we define as special cuts (see \Cref{def:special-cut}). We argue that in that case, $\cut{l_S}{i}$ and $\cut{j}{r_S}$ are also special.

It is still unclear whether we can add both $\cut{l_S}{i}$ and $\cut{j}{r_S}$ to $\cL$. A natural way to proceed with the algorithm is to recurse on all subintervals $\langle x_k,x_{k+1}\rangle$ where $x_1<x_2<\dots<x_m$ are indices such that $\out{1}{x_k}$ is special. Suppose $x_a=i$ and $x_b=j$. When we recurse on $\langle x_{a-1},x_a\rangle$, we need to add special intervals from the right to add $\cut{l_S}{i}$, but when we recurse on $\langle x_{b},x_{b+1}\rangle$, we need to add special intervals from the left to add $\cut{j}{r_S}$. How does the algorithm know which direction to add intervals? A wishful attempt is to alternate between adding intervals from the left and right, and surprisingly this works (modulo some technical additions). It turns out that for any $S = \cut{l_S}{r_S}$, we can pick $O(1)$ sets added to $\cL$ in $3$ recursion layers that covers $S$.



We define special cuts and intervals that are alluded in the outline.
\begin{definition}[Special cuts and intervals]\label{def:special-cut}
    We say that a canonical cut $S=\cut{l}{r}$ is \emph{special} if either
    \begin{enumerate}[(i)]
        \item $S$ is an $\eta$-NMC, or 
        \item there exist two crossing canonical cuts $A$ and $B$, both of which are $\eta$-NMCs, such that $S$ is the intersection, union or set difference of $A$ and $B$.
    \end{enumerate}
    

    We say that an interval $\out{l}{r}$ is \emph{special} if it is a shadow and $\cut{l}{r}$ is a special cut.
\end{definition}
By \Cref{lem:cutdecrement}, any special cut is a $2\eta$-NMC; however, not every $2\eta$-NMC is necessarily special. Similarly, the set difference of two crossing special cuts is a $4\eta$-NMC, since $4\eta < \onefifth$.
Note that by \Cref{fact:canonicalcrossing}, a special interval is the intersection, union or set difference of crossing NMIs.
\def\scale{0.7}
\begin{figure}[t!]\centering
\begin{tikzpicture}[scale=\scale]
		\draw [color=black] (-2,0) -- (16,0);
        \node [color=black,inner sep=2,label={[yshift=-0.9cm]$l_{A}$},circle,fill] at (0,0) () {};
        \node [color=black,inner sep=2,label={[yshift=-0.9cm]$l_{B}$},circle,fill] at (4,0) () {};
        \node [color=black,inner sep=2,label={[yshift=-0.9cm]$r_A$},circle,fill] at (8,0) () {};
        \node [color=black,inner sep=2,label={[yshift=-0.9cm]$r_{B}$},circle,fill] at (12,0) () {};
        \node [color=black,inner sep=2,label={[yshift=-0.9cm]$x$},circle,fill] at (-2,0) () {};
        \draw [color=red!20,line width=2pt] (-2,0.25) -- (8,0.25);
        \draw [color=red!20,line width=2pt] (4,0.5) -- (12,0.5);
\end{tikzpicture}

\begin{flushleft}
\emph{Case 2.} The red intervals show the NMIs used to prove that $A \cap B$ and $B \setminus A$ are special.
\end{flushleft}

\begin{tikzpicture}[scale=\scale]
		\draw [color=black] (-2,0) -- (16,0);
        \node [color=black,inner sep=2,label={[yshift=-0.9cm]$l_{A}$},circle,fill] at (2,0) () {};
        \node [color=black,inner sep=2,label={[yshift=-0.9cm]$l_{B}$},circle,fill] at (4,0) () {};
        \node [color=black,inner sep=2,label={[yshift=-0.9cm]$r_A$},circle,fill] at (8,0) () {};
        \node [color=black,inner sep=2,label={[yshift=-0.9cm]$r_{B}$},circle,fill] at (12,0) () {};
        \node [color=black,inner sep=2,label={[yshift=-0.9cm]$y$},circle,fill] at (0,0) () {};
        \node [color=black,inner sep=2,label={[yshift=-0.9cm]$x$},circle,fill] at (-2,0) () {};
        \draw [color=red!20,line width=2pt] (0,0.25) -- (8,0.25);
        \draw [color=red!20,line width=2pt] (4,0.5) -- (12,0.5);
\end{tikzpicture}

\begin{flushleft}
    \emph{Case 3.} The red intervals show the NMIs used to prove that $A \cap B$ and $B \setminus A$ are special.
\end{flushleft}

\begin{tikzpicture}[scale=\scale]
		\draw [color=black] (-2,0) -- (16,0);
        \node [color=black,inner sep=2,label={[yshift=-0.9cm]$l_{A}$},circle,fill] at (2,0) () {};
        \node [color=black,inner sep=2,label={[yshift=-0.9cm]$l_{B}$},circle,fill] at (4,0) () {};
        \node [color=black,inner sep=2,label={[yshift=-0.9cm]$r_A$},circle,fill] at (8,0) () {};
        \node [color=black,inner sep=2,label={[yshift=-0.9cm]$r_{B}$},circle,fill] at (12,0) () {};
        \node [color=black,inner sep=2,label={[yshift=-0.9cm]$x_2$},circle,fill] at (14,0) () {};
        \node [color=black,inner sep=2,label={[yshift=-0.9cm]$y_2$},circle,fill] at (16,0) () {};
        \node [color=black,inner sep=2,label={[yshift=-0.9cm]$x_1$},circle,fill] at (10,0) () {};
        \draw [color=red!20,line width=2pt] (2,-0.25) -- (10,-0.25);
        \draw [color=red!20,line width=2pt] (4,-0.5) -- (12,-0.5);
        \draw [color=blue!20,line width=2pt] (8,0.25) -- (16,0.25);
        \draw [color=blue!20,line width=2pt] (2,0.5) -- (14,0.5);
        \draw [color=blue!20,line width=2pt] (4,0.75) -- (12,0.75);
\end{tikzpicture}

\begin{flushleft}
\emph{Case 4.} $x_1$ and $x_2$ represent possible positions of $x$ relative to $r_B$. The red intervals show the NMIs used to prove that $A \setminus B$ and $A \cup B$ are special if $x<r_B$. The blue intervals show the NMIs used to prove that $A \cap B$ and $B \setminus A$ are special if $x \ge r_B$.
\end{flushleft}

\begin{tikzpicture}[scale=\scale]
		\draw [color=black] (-2,0) -- (16,0);
        \node [color=black,inner sep=2,label={[yshift=-0.9cm]$l_{A}$},circle,fill] at (-2,0) () {};
        \node [color=black,inner sep=2,label={[yshift=-0.9cm]$l_{B}$},circle,fill] at (4,0) () {};
        \node [color=black,inner sep=2,label={[yshift=-0.9cm]$r_A$},circle,fill] at (10,0) () {};
        \node [color=black,inner sep=2,label={[yshift=-0.9cm]$r_{B}$},circle,fill] at (16,0) () {};
        \node [color=black,inner sep=2,label={[yshift=-0.9cm]$x_2$},circle,fill] at (7,0) () {};
        \node [color=black,inner sep=2,label={[yshift=-0.9cm]$y_1$},circle,fill] at (0,0) () {};
        \node [color=black,inner sep=2,label={[yshift=-0.9cm]$x_1$},circle,fill] at (2,0) () {};
        \draw [color=red!20,line width=2pt] (0,-0.25) -- (10,-0.25);
        \draw [color=red!20,line width=2pt] (4,-0.5) -- (16,-0.5);
        \draw [color=blue!20,line width=2pt] (-2,0.25) -- (7,0.25);
        \draw [color=blue!20,line width=2pt] (4,0.5) -- (16,0.5);
\end{tikzpicture}

\begin{flushleft}
    \emph{Case 5.} $x_1$ and $x_2$ represent possible positions of $x$ relative to $l_B$. The red intervals show the NMIs used to prove that $A \cap B$ and $B \setminus A$ are special. The blue intervals show the NMIs used to prove that $A \setminus B$ and $A \cup B$ are special.
\end{flushleft}

\caption{Figures depicting the proof of \Cref{lem:special-cross-nmc}.}
\label{fig:special-cross-nmi}
\end{figure}

Here is a useful lemma about special intervals that will be used repeatedly in our proofs. \Nathan{Let's try to prove this using 3-way submodularity.}
\begin{lemma}\label{lem:special-cross-nmc}
Let $A = \out{l_A}{r_A}$ be a special interval and $B = \out{l_B}{r_B}$ be an NMI crossing $A$. Then, 
\begin{itemize}
\item at least one of $A \cap B$ and $A \setminus B$ is special.
\item at least one of $A \cup B$ and $B \setminus A$ is special.
\end{itemize}
\end{lemma}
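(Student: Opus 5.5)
The plan is to reduce everything to combinatorics of interval endpoints, using one observation repeatedly. If $C=\out{l_C}{r_C}$ and $D=\out{l_D}{r_D}$ are crossing NMIs, then the four sets $\out{\cdot}{\cdot}$ obtained from their intersection, union and differences are special intervals. Indeed, by \Cref{lem:cutdecrement} the corresponding cuts are $2\eta$-NMCs, and their outside atom sets are exactly the combined intervals. Hence they are shadows, with canonical cuts equal to those combinations, which are special by definition. Moreover, crossing of NMIs is equivalent to crossing of their intervals, by \Cref{lem:canonicalcrossing} and \Cref{fact:canonicalcrossing}. So to prove that an interval such as $\out{l_B}{r_A}$ is special, it suffices to exhibit an NMI whose interval crosses $\out{l_B}{r_B}$ and whose combination with it equals $\out{l_B}{r_A}$.

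First I normalize. Intervals live on a cycle, but I work on the line $1,\dots,n$ with the root atom excluded. By reversing orientation and, if needed, swapping the roles of the two conclusions, I assume $l_A<l_B<r_A<r_B$. (If $B$ contains $a_0$, I would replace $B$ by its complement and note that the pair of bullets is permuted accordingly.) Then
\[
A\cap B=\out{l_B}{r_A},\quad A\setminus B=\out{l_A}{l_B},\quad B\setminus A=\out{r_A}{r_B},\quad A\cup B=\out{l_A}{r_B}.
\]

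\emph{Case 1} ($A$ is itself an NMI) is immediate from the observation. Otherwise $A$ arises from two crossing NMIs $C,D$ as an intersection, union or difference. In each subcase, each endpoint of $A$ is an endpoint of one of $C,D$, and the other endpoints of $C,D$ lie at positions $x,y$ outside or inside $[l_A,r_A]$. These are the configurations of Cases 2–5 in \Cref{fig:special-cross-nmi}.

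The key mechanism is this. Suppose some NMI $E$ has $r_A$ as an endpoint and its other endpoint lies outside $[l_B,r_B]$. Then $E$ crosses $B$, and combining $E$ with $B$ yields both $\out{l_B}{r_A}$ and $\out{r_A}{r_B}$. This gives $A\cap B$ and $B\setminus A$ special simultaneously, settling both bullets. There are two possibilities for $E$:
\begin{itemize}
\item $E=\out{y}{r_A}$ with $y<l_B$, or
\item $E=\out{r_A}{z}$ with $z>r_B$.
\end{itemize}
Symmetrically, an NMI with endpoint $l_A$ whose other endpoint lies strictly between $l_B$ and $r_B$ crosses $B$. Combining it with $B$ yields $\out{l_A}{l_B}$ and $\out{l_A}{r_B}$, that is, $A\setminus B$ and $A\cup B$ special, which again settles both bullets.

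So in each configuration I examine where the other endpoint of the NMI sharing $r_A$ (or $l_A$) lies relative to $l_B,r_B$. This is the role of $x_1$ versus $x_2$ in Cases 4 and 5.

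The main obstacle is the remaining configurations, where the relevant NMI's other endpoint falls inside $B$'s span so that it is nested in $B$ rather than crossing it. There I would use the second NMI defining $A$. Sometimes this requires chaining: first form a special interval, for example $\out{l_A}{x}$ from $C$ and $D$. Then I note that one of the needed target intervals is the intersection or difference of $B$ with an NMI among $C,D$ that now does cross $B$ (the blue and red triples in Case 4).

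I must also check the degenerate situations in which an endpoint of $C$ or $D$ coincides with $l_B$ or $r_B$. In those situations the target interval is literally one of $C$, $D$ or $B$, hence an NMI and special.
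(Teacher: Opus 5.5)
Your proposal is correct and is essentially the paper's proof: the paper also normalizes to $l_A<l_B<r_A<r_B$ and splits into the same Cases 1--5 according to how $A$ is special. In each case it combines $B$ with one of the NMIs defining $A$, which is exactly one of your two mechanisms: an NMI with endpoint $r_A$ and other endpoint outside $[l_B,r_B]$, giving $A\cap B$ and $B\setminus A$, or $\out{l_A}{x}$ with $l_B<x<r_B$, giving $A\setminus B$ and $A\cup B$. The ``chaining'' and separate degenerate-case handling you anticipate are never needed, since in every case one of the defining NMIs fits one of the two mechanisms directly (in Case 4 either $x<r_B$ or else $y>x\ge r_B$; in Case 5 either $x>l_B$ or else $y<x\le l_B$).
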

\begin{proof}
    Without loss of generality, assume that $l_A<l_B<r_A<r_B$. There are several cases depending on how $A$ is special; we refer to \Cref{fig:special-cross-nmi} for an illustration of all cases other than the first.
\begin{enumerate}
\item $A$ is an NMI.

Then, $A \cap B$, $A \cup B$, $A \setminus B$, and $B \setminus A$ are all special by definition.
\item $A = \out{x}{r_A} \cap \out{l_A}{y}$ for some NMIs $\out{x}{r_A}$ and $\out{l_A}{y}$ with $x < l_A < r_A < y$.

Then, $A \cap B = \out{l_B}{r_A} = \out{x}{r_A} \cap B$ and $B \setminus A = \out{r_A}{r_B} = B \setminus \out{x}{r_A}$ are special.
\item $A = \out{y}{r_A} \setminus \out{x}{l_A}$ for some NMIs $\out{y}{r_A}$ and $\out{x}{l_A}$ with $x<y<l_A<r_A$.

Then, $A \cap B = \out{l_B}{r_A} = \out{y}{r_A} \cap B$ and $B \setminus A = \out{r_A}{r_B} = B \setminus \out{y}{r_A}$ are special. 
\item $A = \out{l_A}{x} \setminus \out{r_A}{y}$ for some NMIs $\out{l_A}{x}$ and $\out{r_A}{y}$ with $l_A < r_A < x < y$.

If $x<r_B$, then $A \setminus B = \out{l_A}{l_B} = \out{l_A}{x} \setminus B$ and $A \cup B = \out{l_A}{r_B} = \out{l_A}{x} \cup B$ are special. Otherwise, $x \ge r_B$, so $y>r_B$. Then, $A \cap B = \out{l_B}{r_A} = B \setminus \out{r_A}{y}$ and $B \setminus A = \out{r_A}{r_B} = \out{r_A}{y} \cap B$ are special. 
\item $A = \out{l_A}{x} \cup \out{y}{r_A}$ for some NMIs $\out{l_A}{x}$ and $\out{y}{r_A}$ with $l_A<y<x<r_A$.

If $x \le l_B$, then $y < l_B$, so $A \cap B = \out{l_B}{r_A} = \out{y}{r_A} \cap B$ and $B \setminus A = \out{r_A}{r_B} = B \setminus \out{y}{r_A}$ are special. Otherwise, $x > l_B$, so $A \setminus B = \out{l_A}{l_B} = \out{l_A}{x} \setminus B$ and $A \cup B = \out{l_A}{r_B} = \out{l_A}{x} \cup B$ are special. 
\qedhere 
\end{enumerate} 
\end{proof}

Our algorithm for constructing a laminar family of near minimum cuts is described in \Cref{alg:construct-laminar}.
The procedure is recursive; our family $\mathcal{L}$ is generated by running $\Cons(1,n,0,0)$ on the outside atoms of our polygon $P$.
Each call either proceeds ``from the left'' or ``from the right''; 
crucially, this direction alternates as one goes down the recursion tree.
Suppose we are proceeding from the left on some interval $\out{L}{R}$ (everything is symmetric). 
The algorithm finds a maximal chain $\out{L}{x_1} \subset \out{L}{x_2} \subset  \ldots \subset \out{L}{x_m}$ of special sets within the interval.
The algorithm recurses into the rings $\out{x_i}{x_{i+1}}$ of this chain, including $\out{L}{x_1}$ and $\out{x_m}{R}$.
Except for possibly $\out{x_m}{R}$, all these intervals are special or set differences of special intervals, and thus their corresponding cuts $\cut{x_i}{x_{i+1}}$ are $4\eta$-minimum cuts. These are all added to $\cL$. 
The special cuts $\cut{L}{x_i}$ are also added. 

One technical addition is the $\untouched$ counter. 
It may be that the algorithm is applied to some interval $\out{L}{R}$, say from the left, but there are no special cuts of the form $\out{L}{x}$.
The algorithm will then recurse on the same interval, but now from the right.
If, however, there are also no special intervals of the form $\out{x}{R}$, we  shrink the interval to make sure the algorithm continues.

\begin{algorithm}[tb]
\begin{algorithmic}[1]
    \Require{A tuple $(L,R,{\depth},\untouched)$.}
    \Ensure{A laminar family of $4\eta$-near minimum cuts.}
    \If{$L \ge R$}
    \State \Return{$\emptyset$.}
    \ElsIf{$\untouched \ge 2$}
    \State \Return{$\Cons(L+1,R,{\depth}+1,0)$.}
    \EndIf
    \State $\mathcal{L} := \emptyset$.
    \If{$\out{L}{R}$ is a special interval or the set difference of two crossing special intervals} 
    \State\label{line:add-active-interval} $\mathcal{L} := \{\cut{L}{R}\}$.
    \EndIf
    \If{${\depth}$ is even}
\State Let $x_1<x_2<\dots<x_m$ be the elements of $S = \{ x : L < x \leq R \text{ and } \out{L}{x}\text{ is special}\}$. Let $x_0 = L$ and $x_{m+1}=R$. 
        \If{$m \ge 1$}
            \State\label{line:prefix-chains} \Return{$\mathcal{L} \cup \bigcup_{i=1}^{m}\cut{L}{x_i} \cup \bigcup_{i=0}^{m}\Cons(x_i,x_{i+1},{\depth}+1,0)$}. 
        \Else
            \State \Return{$\Cons(L,R,{\depth}+1,\untouched+1)$.}
        \EndIf
    \Else
    \State Let $x_1<x_2<\dots<x_m$ be the elements of $S = \{x : L \leq x < R \text{ and } \out{x}{R}\text{ is special}\}$. Let $x_0 = L$ and $x_{m+1}=R$. 
        \If{$m \ge 1$}
            \State\label{line:suffix-chains} \Return{$\mathcal{L} \cup \bigcup_{i=1}^{m}\cut{x_i}{R} \cup \bigcup_{i=0}^{m}\Cons(x_i,x_{i+1},{\depth}+1,0)$}. 
        \Else
            \State \Return{$\Cons(L,R,{\depth}+1,\untouched+1)$.}
        \EndIf
    \EndIf
\end{algorithmic}
\caption{Procedure $\Cons(L, R, {\depth},\untouched)$.}\label{alg:construct-laminar}
\end{algorithm}

Let $\cL$ be the family obtained from $\Cons(1,n,0,0)$.
\begin{lemma}
$\cL$ is a laminar family of $4\eta$-near minimum cuts.
\end{lemma}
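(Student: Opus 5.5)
The proof has two parts: every cut added is a $4\eta$-NMC, and the family is laminar. For both, I would first establish a structural invariant of the recursion. Every call $\Cons(L,R,\cdot,\cdot)$ that is actually executed satisfies $1\le L\le R\le n$. Every cut added during the subtree of that call has an outside-atom set that is an interval contained in $\out{L}{R}$. Moreover, the intervals handed to the children of a call, namely $\out{x_i}{x_{i+1}}$ for $i=0,\dots,m$, partition $\out{L}{R}$. The same holds for the single child $\out{L}{R}$ or $\out{L+1}{R}$ in the degenerate branches. This follows by induction on the recursion, since $L=x_0<x_1<\dots<x_{m+1}=R$.

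\emph{Near-minimality.} A cut is added in one of three places. Line~\ref{line:add-active-interval} adds $\cut{L}{R}$ only when $\out{L}{R}$ is special or the set difference of two crossing special intervals. A special cut is a $2\eta$-NMC by \Cref{lem:cutdecrement}, and then a difference of two crossing special cuts is a $4\eta$-NMC, again by \Cref{lem:cutdecrement}. The cuts $\cut{L}{x_i}$ in line~\ref{line:prefix-chains} and $\cut{x_i}{R}$ in line~\ref{line:suffix-chains} are special by the choice of $S$, hence $2\eta$-NMCs. One subtlety remains: the canonical cut must be the cut whose outside atoms form that interval, and this is guaranteed by the shadow condition together with \Cref{lem:extendedprop20}. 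For the set-difference case, I would check that the difference of the two crossing canonical cuts is a $\onefifth$-NMC with the right outside atoms, so it coincides with $\cut{L}{R}$.

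\emph{Laminarity.} The main obstacle is that the cuts contain inside atoms, not only outside atoms, so laminarity of intervals does not immediately give laminarity of the cuts. I would argue in two steps.

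First, the intervals are laminar. Take two added cuts coming from calls $c$ and $c'$. If neither call is an ancestor of the other, their intervals are disjoint by the partition invariant. If $c$ is an ancestor of $c'$, or $c=c'$, then the cut from $c'$ lies inside some child interval $\out{x_i}{x_{i+1}}$ of $c$. Every cut added at $c$ is either $\out{L}{R}$ itself or a prefix $\out{L}{x_j}$ (resp.\ suffix $\out{x_j}{R}$). Each of these either contains $\out{x_i}{x_{i+1}}$ or is disjoint from it, because $x_j$ is one of the breakpoints. The cuts added at $c$ itself form a chain together with $\out{L}{R}$.

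Second, I transfer laminarity of intervals to the cuts. Suppose two added cuts $A,B$ crossed. Their outside-atom sets would be laminar intervals, so they would not cross. I would then invoke \Cref{lem:canonicalcrossing}, which applies since all cuts involved are $\onefifth$-NMCs. Its hypothesis as stated is $O(A)=O(B)\ne\emptyset$, which looks like a typo. I would use the intended reading from \cite{KKO21b}: two $\onefifth$-NMCs cross iff their outside-atom sets cross. That gives a contradiction, so $A$ and $B$ do not cross. Since all cuts are identified with the shore whose outside atoms form an interval in $\out{1}{n}$ (avoiding the root atom $a_0$), non-crossing of the cuts amounts to containment or disjointness of these shores, which is exactly laminarity. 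I would double-check the case where the union of the two shores is everything; this cannot happen because $a_0$ lies in neither shore. Finally, repeated intervals give identical cuts, so duplicates cause no issue.
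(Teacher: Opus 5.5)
Your proposal is correct and follows essentially the same route as the paper. Near-minimality comes from \Cref{lem:cutdecrement}, laminarity of the intervals from the recursive prefix/suffix-chain structure, and the transfer to cuts from the crossing correspondence of \Cref{fact:canonicalcrossing}, which is the corrected reading of \Cref{lem:canonicalcrossing}. Your extra checks fill in details the paper leaves implicit: organising the argument by ancestry in the recursion tree, showing $\cut{L}{R}$ is well defined for set-difference intervals, and using $a_0$ to exclude $A\cup B=V$. One small correction: the child $\out{L+1}{R}$ is only contained in $\out{L}{R}$ rather than partitioning it, but containment is all your argument needs.
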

\begin{proof}
\nnote{I modified this proof; probably unnecessarily. Check if OK.}
    All the cuts added to $\cL$ are either special or the setwise difference of two crossing special cuts, and so are $4\eta$-near minimum cuts by \cref{lem:cutdecrement}. 
    
    To show that $\cL$ is a laminar family, firstly note that all the cuts added by \Cref{alg:construct-laminar} are of the form $\cut{l}{r}$ with $1 \le l < r \le n$. Hence, by \cref{fact:canonicalcrossing}\Nathan{added this ref to 3.4} $\cL$ is laminar if and only if the corresponding intervals $O(\cL) := \{ \out{l}{r} : \cut{l}{r} \in \cL\}$ form a laminar family.

We prove by induction on $R-L$ that $\Cons(L,R,\cdot,\cdot)$ returns a family of $\cL$ such that $O(\cL$) is a laminar family of intervals contained in $\out{L}{R}$, which suffices by the above observation. 

The base case $R-L=0$ is obvious.
For the inductive step, consider a call of the form $\Cons(L,R,{\depth},\untouched)$. Either the algorithm calls $\Cons$ with a subinterval of $\out{L}{R}$ at a higher depth, or line \ref{line:prefix-chains} or line \ref{line:suffix-chains} was triggered. If line \ref{line:prefix-chains} was triggered (the other case is analogous), then the cuts added to $\cL$ before recursing are $\cut{L}{x_1}$, $\cut{L}{x_2}, \ldots, \cut{L}{x_m}$, and possibly $\cut{L}{R}$, with corresponding intervals $\out{L}{x_1}$, $\out{L}{x_2}, \ldots, \out{L}{x_m}$ and $\out{L}{R}$ included in $O(\cL)$. Furthermore, the algorithm then recurses into each subinterval of the form $\out{x_i}{x_{i+1}}$, as well as $\out{L}{x_1}$ and $\out{x_m}{R}$. 
By the induction hypothesis, each of these adds a laminar family contained within some $\out{x_{i-1}}{x_i}$ to $O(\cL)$, and so $O(\cL)$ is laminar.
\end{proof}



\subsection{Analysis for Outside Atoms}\label{subsec:outside-atoms}

Here, we will prove a weaker coverage statement than is claimed by \Cref{thm:L-covers-nmc-general}, that only refers to outside atoms.

\begin{definition}
    Given a pair $(u,v)$ of atoms, we say that a cut $C$ \emph{covers} $(u,v)$ if $u$ and $v$ lie on different sides of $C$, and we say that a collection of cuts $\Cover$ \emph{covers} $(u,v)$ if at least one cut in $\Cover$ covers $(u,v)$.
%
\end{definition}

\begin{lemma}\label{lem:outside-edges-covered}
Let $S$ be an $\eta$-near minimum cut.
Then there exists a collection of cuts $\Cover \subseteq \cL$ of size at most $8$ such that every pair of outside atoms covered by $S$ is covered by $\Cover$. 
Furthermore, all of the sets in $\Cover$ are special except for at most one, which is the set difference of two crossing special cuts.
\end{lemma}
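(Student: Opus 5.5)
We track the interval $O(S)=\out{\ltarget}{\rtarget}$ down the recursion tree of $\Cons(1,n,0,0)$. (If $S$ contains the root atom $a_0$ we replace $S$ by its complement; this changes neither the pairs it covers nor its size.) Then $O(S)$ is an NMI $\out{\ltarget}{\rtarget}$ with $1\le \ltarget<\rtarget\le n$. A pair of outside atoms $(a_u,a_v)$ is covered by $S$ exactly when one of $u,v$ lies in $[\ltarget,\rtarget)$ and the other does not. It therefore suffices to find few cuts in $\cL$ whose intervals have endpoints that "separate" at $\ltarget$ and at $\rtarget$. Concretely, we want cuts in $\cL$ whose boundaries include the points $p_{\ltarget}$ and $p_{\rtarget}$. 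A pair straddling $p_{\ltarget}$ must then be split by one of them. More precisely, take a union of intervals from $\cL$, together with set differences, that realises $\out{\ltarget}{\rtarget}$ by symmetric difference. Any pair separated by $S$ is then separated by one of these intervals.

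We follow the unique recursive call $\Cons(L,R,\depth,\cdot)$ at which $\out{L}{R}\supseteq\out{\ltarget}{\rtarget}$ holds but no child interval contains $\out{\ltarget}{\rtarget}$. The $\untouched$ steps only shrink by one atom after two failed directions, so we treat them separately. Suppose this call proceeds from the left with chain points $x_1<\dots<x_m$. Let $i$ be the first chain point in $(\ltarget,\rtarget]$ and $j$ the last one. Cuts $\cut{L}{x}$ with $x$ special are in $\cL$. If $\ltarget=L$ or $\ltarget$ is a chain point, the left boundary is already realised; similarly on the right. Otherwise consider $\out{L}{i}$, which is special and crosses the NMI $S$ (or contains its left end). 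By \Cref{lem:special-cross-nmc}, either $\out{\ltarget}{i}=S\cap\out{L}{i}$ or $\out{L}{\ltarget}=\out{L}{i}\setminus S$ is special. The second option contradicts the choice of chain points, because $\out{L}{\ltarget}$ would then be a chain element; it would only be consistent if $\ltarget$ were itself a chain point. Hence $\out{\ltarget}{i}$ is special.

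Symmetrically, using $\out{L}{j}$ and the second bullet of the lemma, $\out{j}{\rtarget}$ is special, or else $\out{L}{\rtarget}$ is special, in which case $\rtarget$ is a chain point. Now $\out{\ltarget}{i}$ is a special suffix of the child interval $\out{x_{a-1}}{i}$, where $i=x_a$. The child call runs at odd depth, i.e.\ from the right, so $\ltarget$ becomes one of its chain points and $\cut{\ltarget}{i}\in\cL$. On the right side, $\out{j}{\rtarget}$ is a special prefix of the child $\out{j}{x_{b+1}}$, which runs from the right, not the left. This is the wrong direction. We then descend one more level inside that child: its right-chain points $y$ yield special intervals $\out{y}{x_{b+1}}$, and we pick the one just past $\rtarget$. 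Applying \Cref{lem:special-cross-nmc} to it and to the special interval $\out{j}{\rtarget}$ shows that the relevant piece becomes a chain point at depth $+2$, which proceeds from the left again. This is where the alternating direction pays off. The allowance of one non-special cut in the statement comes from needing a set difference of two crossing special intervals, added at line~\ref{line:add-active-interval}, as the last piece.

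Counting gives at most $\cut{L}{i},\cut{L}{j},\cut{\ltarget}{i}$, plus up to about four cuts on the right descent, plus possibly $\cut{L}{R}$ itself. This stays within $8$. The case with no chain point in $(\ltarget,\rtarget]$ pushes $S$ into a single child, which contradicts the choice of the call. The degenerate case $m=0$ flips direction, and if $m=0$ again we move to $L+1$; in both cases $S$ still lies inside the new interval unless $\ltarget=L$. In that case $\out{L}{\rtarget}$ is itself special and would have been a chain point, a contradiction.

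The main obstacle is the right-hand descent. There \Cref{lem:special-cross-nmc} is applied to a special interval rather than to an NMI, while the lemma requires $B$ to be an NMI. I would first try to apply it with $B=S$ and $A$ equal to the chain interval of the grandchild call. The case split must then show that the unwanted alternative produces a chain element contradicting maximality. Checking that every case closes within three recursion layers is the delicate bookkeeping.
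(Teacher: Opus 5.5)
Your setup is the same as the paper's proof: the deepest call containing $\out{\ltarget}{\rtarget}$, the argument that $\untouched<2$, the chain points $i,j$, and the two applications of \Cref{lem:special-cross-nmc} showing that $\out{\ltarget}{i}$ and $\out{j}{\rtarget}$ are special. You also correctly note that $\cut{\ltarget}{i}$ is added by the odd-depth call on $\out{x_{a-1}}{i}$.

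The gap is the right-hand side. This is the real content of the lemma, and you leave it open as the ``main obstacle''; the sketch you give for it does not work. First, \Cref{lem:special-cross-nmc} cannot be applied to $\out{y}{x_{b+1}}$ and $\out{j}{\rtarget}$, because neither is an NMI (and for $y\ge\rtarget$ they do not even cross). More fundamentally, a single right-chain point $y$ of the child call cannot suffice. Write $r_2=x_{b+1}$.
\begin{itemize}
\item If $y\ge\rtarget$, then $\cut{y}{r_2}$ separates neither the pairs from $\out{j}{\rtarget}$ to $\out{\rtarget}{y}$ nor the pairs from $\out{j}{\rtarget}$ to atoms outside $\out{L}{r_2}$.
\item If $y\le\rtarget$, it misses the pairs from $\out{j}{y}$ to atoms outside $\out{L}{r_2}$.
\end{itemize}
So neither your count (``about four'') nor the claim that exactly one set difference is needed is supported yet.

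The missing idea is to use three chain points of the child call and two of its children. Let $d$ be the depth of your chosen call, and consider the depth-$(d+1)$ call on $\out{j}{r_2}$, which runs from the right.

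If that call has no chain point in $[j,\rtarget]$, then $\out{j}{\rtarget}$ is a special prefix of a depth-$(d+2)$ interval processed from the left, so $\cut{j}{\rtarget}\in\cL$ and you are done.

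Otherwise, let $z$ and $l_3$ be its first and last chain points in $[j,\rtarget]$, and let $r_3$ be its first chain point $\ge\rtarget$. Then $\cut{z}{r_2},\cut{l_3}{r_2},\cut{r_3}{r_2}\in\cL$. Among pairs between $\out{j}{\rtarget}$ and atoms outside $\out{L}{\rtarget}$, these three cuts miss only:
\begin{itemize}
\item[(a)] pairs between $\out{j}{z}$ and atoms outside $\out{L}{r_2}$;
\item[(b)] pairs between $\out{l_3}{\rtarget}$ and $\out{\rtarget}{r_3}$.
\end{itemize}

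For (a), the first child of the depth-$(d+1)$ call is $\out{j}{z}=\out{j}{\rtarget}\setminus\out{z}{r_2}$. This is a set difference of two crossing special intervals, so line~\ref{line:add-active-interval} adds $\cut{j}{z}$. It is the unique non-special member of $\Cover$.

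For (b), apply \Cref{lem:special-cross-nmc} with $B=S$ and $A=\out{l_3}{r_2}$. Note that $A$ is a chain interval of the depth-$(d+1)$ call, not of the depth-$(d+2)$ call as you suggest. Here $A\setminus B=\out{\rtarget}{r_2}$ is not special, by maximality of $l_3$. Hence $\out{l_3}{\rtarget}$ is special, so it is a chain point of the depth-$(d+2)$ call on $\out{l_3}{r_3}$, which runs from the left. This gives $3+5=8$ cuts.

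Your symmetric-difference viewpoint then even shows that $\cut{r_3}{r_2}$ is redundant, since
\[
\out{\ltarget}{\rtarget}=\out{L}{i}\triangle\out{L}{j}\triangle\out{\ltarget}{i}\triangle\out{j}{z}\triangle\out{z}{r_2}\triangle\out{l_3}{r_2}\triangle\out{l_3}{\rtarget}.
\]
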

Note that if ``every pair of outside atoms'' was instead ``every pair of atoms'', this would yield \Cref{thm:L-covers-nmc-general}.
\begin{proof}
Since $S$ is an $\eta$-near minimum cut, $S = \cut{\ltarget}{\rtarget}$ for some $1 \le \ltarget < \rtarget \le n$. We say that an interval $\out{L}{R}$ is \emph{active at depth $d$} if $\Cons(L,R,d,\cdot)$ was called at some point. Let 
\[ \Cons(\lactive1,\ractive1,\depthact,\untouchedact) \]
be the call of $\Cons$ with the highest depth such that $\out{\ltarget}{\rtarget} \subseteq \out{\lactive1}{\ractive1}$. 

Firstly, we show that $\untouchedact < 2$. 
If this is not the case, then 
by maximality of $\depthact$ we must have that $\lactive1=\ltarget$, and $\out{\lactive1}{x}$ is not special for any $x$ such that $\lactive1<x\le \ractive1$. 
However, $\out{\ltarget}{\rtarget} = \out{\lactive1}{\rtarget}$ is an NMI, a contradiction.

Without loss of generality, assume that $\depthact$ is even. Since $\untouchedact < 2$, there exists some $x$ such that $\ltarget<x<\rtarget$ and $\out{\lactive1}{x}$ is special (or else $\out{\ltarget}{\rtarget}$ is contained in some active interval with depth $\depthact+1$). Let 
\begin{align*}
    &\lone = \min\{x: \ltarget \le x \le \rtarget, \out{\lactive1}{x}\text{ is special}\}\\ &\rone = \max\{ x: \ltarget \le x \le \rtarget,  \out{\lactive1}{x}\text{ is special}\}. 
\end{align*}
Since $\depthact$ is even and $\out{\lactive1}{\lone}$, $\out{\lactive1}{\rone}$ are special, $\cut{\lactive1}{\lone}$ and $\cut{\lactive1}{\rone}$ are added to $\cL$ by line \ref{line:prefix-chains} of \Cref{alg:construct-laminar}. 
These two cuts cover all pairs of outside atoms covered by $\out{\ltarget}{\rtarget}$, except for pairs involving either
\begin{enumerate}[(i)]
\item an outside atom in $\out{\lactive1}{\ltarget}$ and an outside atom in $\out{\ltarget}{\lone}$, or
\item an outside atom in $\out{\rone}{\rtarget}$ and an outside atom not in $\out{\lactive1}{\rtarget}$.
\end{enumerate}
\Cref{fig:construct-starting-config} shows the relative positions of the defined points.
\def\scale{0.7}
\begin{figure}[htb!]\centering
\begin{tikzpicture}[scale=\scale]
		\draw [color=black] (-2,0) -- (16,0);
        \node [color=red,inner sep=2,label={[yshift=-0.9cm]$\lactive1$},circle,fill] at (0,0) () {};
        \node [color=red,inner sep=2,label={[yshift=-0.9cm]$\ractive1$},circle,fill] at (14,0) () {};
        \node [color=blue,inner sep=2,label={[yshift=-0.9cm]$\ltarget$},circle,fill] at (2,0) () {};
        \node [color=blue,inner sep=2,label={[yshift=-0.9cm]$\rtarget$},circle,fill] at (11,0) () {};
        \node [color=black,inner sep=2,label={[yshift=-0.9cm]$\lone$},circle,fill] at (4,0) () {};
        \node [color=black,inner sep=2,label={[yshift=-0.9cm]$\rone$},circle,fill] at (8,0) () {};
        \node [color=black,inner sep=2,label={[yshift=-0.9cm]$1$},circle,fill] at (-2,0) () {};
        \node [color=black,inner sep=2,label={[yshift=-0.9cm]$n$},circle,fill] at (16,0) () {};
        \draw [color=red!20,line width=2pt] (0,0.25) -- (4,0.25);
        \draw [color=red!20,line width=2pt] (0,0.5) -- (8,0.5);
\end{tikzpicture}
\caption{\small  A figure showing the relative locations of the points defined in the first part of the proof of \Cref{lem:outside-edges-covered}.
The cut we are analyzing is $\out{\ltarget}{\rtarget}$. The cuts $\cut{\lactive1}{\lone}$ and $\cut{\lactive1}{\rone}$ are in $\cL$ and are included in $\Cover$.}
	\label{fig:construct-starting-config}
\end{figure}

Firstly, we prove that the pairs of outside atoms such that one is in $\out{\lactive1}{\ltarget}$ and the other is in $\out{\ltarget}{\lone}$ are covered by a cut in $\mathcal{L}$.
\begin{claim}\label{lem:left-edges}
There is a special cut in $\cL$ that covers every pair of outside atoms such that one is in $\out{\lactive1}{\ltarget}$ and the other is in $\out{\ltarget}{\lone}$.
\end{claim}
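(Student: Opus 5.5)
If $\lactive1 = \ltarget$ the set $\out{\lactive1}{\ltarget}$ is empty and there is nothing to prove, so I will assume $\lactive1 < \ltarget$. The plan is to show that $\out{\ltarget}{\lone}$ is a special interval, and that the recursive call on the ring of the prefix chain ending at $\lone$ adds $\cut{\ltarget}{\lone}$ to $\cL$. This cut has $\out{\ltarget}{\lone}$ on one side and all of $\out{\lactive1}{\ltarget}$ on the other, so it covers exactly the required pairs.

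\emph{Step 1: $\out{\ltarget}{\lone}$ is special.} If $\lone = \rtarget$, then $\out{\ltarget}{\lone} = \out{\ltarget}{\rtarget}$ is an NMI, hence special. Otherwise $\lactive1 < \ltarget < \lone < \rtarget$. Note that $\ltarget < \lone$ here: if $\lone = \ltarget$, the claim is again trivial, since $\out{\ltarget}{\lone}$ is empty. In this case the special interval $A = \out{\lactive1}{\lone}$ crosses the NMI $B = \out{\ltarget}{\rtarget}$. By the first bullet of \Cref{lem:special-cross-nmc}, either $A \cap B = \out{\ltarget}{\lone}$ or $A \setminus B = \out{\lactive1}{\ltarget}$ is special. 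The latter would place $\ltarget$ in the set over which $\lone$ is minimized, forcing $\lone \le \ltarget$, which is a contradiction. Hence $\out{\ltarget}{\lone}$ is special.

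\emph{Step 2: the cut is added.} Since $\depthact$ is even and $\out{\lactive1}{\lone}$ is special, $\lone = x_a$ for some $a \ge 1$ in the chain $x_1 < \dots < x_m$ computed at line \ref{line:prefix-chains}. The algorithm then calls $\Cons(x_{a-1}, x_a, \depthact+1, 0)$. I claim $x_{a-1} < \ltarget$. If $a = 1$, then $x_0 = \lactive1 < \ltarget$. Otherwise $x_{a-1}$ is a special prefix endpoint with $x_{a-1} < \lone$, so by minimality of $\lone$ it cannot lie in $[\ltarget, \rtarget]$; since it is at most $\lone \le \rtarget$, it must be below $\ltarget$.

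In this call we have $L = x_{a-1} < R = \lone$, $\untouched = 0$, and the depth is odd. So the suffix branch runs, and the set $\{x : x_{a-1} \le x < \lone,\ \out{x}{\lone} \text{ special}\}$ contains $\ltarget$. Line \ref{line:suffix-chains} therefore adds $\cut{\ltarget}{\lone}$, which is a special cut, to $\cL$.

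\emph{Step 3: coverage.} The cut $\cut{\ltarget}{\lone}$ has outside atoms exactly $\out{\ltarget}{\lone}$. Every atom of $\out{\lactive1}{\ltarget}$ lies on the other side, so every required pair is separated. The only point needing care is Step 1, namely ruling out the alternative $A \setminus B$ via the minimality of $\lone$, together with the boundary cases $\lone \in \{\ltarget, \rtarget\}$. I expect the rest to be bookkeeping about the chain indices.
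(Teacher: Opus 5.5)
Your proof is correct and follows the paper's argument. You apply \Cref{lem:special-cross-nmc} to $A=\out{\lactive1}{\lone}$ and $B=\out{\ltarget}{\rtarget}$, rule out $A\setminus B$ by the minimality of $\lone$, and conclude that the odd-depth call on the ring ending at $\lone$ adds $\cut{\ltarget}{\lone}$ at line \ref{line:suffix-chains}. Your Step 2 makes explicit, via $x_{a-1}<\ltarget$, the paper's remark that some active interval at depth $\depthact+1$ ends at $\lone$ and contains $\ltarget$. Your extra case $\lone=\rtarget$ is harmless but cannot occur, since the maximality of $\depthact$ already forces a special prefix endpoint strictly between $\ltarget$ and $\rtarget$.
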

\begin{nestedproof}
If $\ltarget=\lone$, no such pairs exist and we are done. Otherwise, $\out{\lactive1}{\ltarget}$ is not special. 
We show that $\cut{\ltarget}{\lone}$ is active at depth $\depthact+1$ and is added to $\cL$; this will be our desired cut. 
Note that 
there will be some active interval of depth $\depthact + 1$ ending at $\lone$ containing $\ltarget$. Since $\depthact+1$ is odd, if $\out{\ltarget}{\lone}$ is special, then line \ref{line:suffix-chains} of \Cref{alg:construct-laminar} will add $\cut{\ltarget}{\lone}$ to $\mathcal{L}$. Thus, it is sufficient to show that $\out{\ltarget}{\lone}$ is special.

By \Cref{lem:special-cross-nmc} with $A = \out{\lactive1}{\lone}$ and $B = \out{\ltarget}{\rtarget}$, either $A \cap B$ or $A \setminus B$ is special. 
However, $A \setminus B = \out{\lactive1}{\ltarget}$ is not special from the definition of $\lone$, since $\ltarget < \lone$. 
Hence, $A \cap B = \out{\ltarget}{\lone}$ is special, as claimed. \Cref{fig:ltargetlone} shows the intervals used in the argument.
\def\scale{0.7}
\begin{figure}[htb!]\centering
\begin{tikzpicture}[scale=\scale]
		\draw [color=black] (-2,0) -- (16,0);
        \node [color=black,inner sep=2,label={[yshift=-0.9cm]$\lactive1$},circle,fill] at (0,0) () {};
        \node [color=black,inner sep=2,label={[yshift=-0.9cm]$\ractive1$},circle,fill] at (14,0) () {};
        \node [color=black,inner sep=2,label={[yshift=-0.9cm]$\ltarget$},circle,fill] at (2,0) () {};
        \node [color=black,inner sep=2,label={[yshift=-0.9cm]$\rtarget$},circle,fill] at (11,0) () {};
        \node [color=black,inner sep=2,label={[yshift=-0.9cm]$\lone$},circle,fill] at (4,0) () {};
        \node [color=black,inner sep=2,label={[yshift=-0.9cm]$\rone$},circle,fill] at (8,0) () {};
        \draw [color=red!20,line width=2pt] (0,0.25) -- (4,0.25);
        \draw [color=blue!20,line width=2pt] (2,0.5) -- (11,0.5);
\end{tikzpicture}
\caption{Figure depicting the proof that $\out{\ltarget}{\lone}$ is special. The red interval denotes $A$ and the blue interval denotes $B$.}
\label{fig:ltargetlone}
\end{figure}
\end{nestedproof}

Finally, we prove that the pairs of outside atoms such that one is in $\out{\rone}{\rtarget}$ and the other is not in $\out{\lactive1}{\rtarget}$ are covered by at most $5$ cuts in $\cL$.
\begin{claim}\label{lem:right-edges}
There is a collection $\Cover'$ of at most $5$ cuts in $\cL$ that covers every pair of outside atoms such that one is in $\out{\rone}{\rtarget}$ and the other is not in $\out{\lactive1}{\rtarget}$. Furthermore, all the cuts in $\Cover'$ are special except possibly one, which is the set difference of two crossing special cuts.
\end{claim}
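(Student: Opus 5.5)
We cover the remaining pairs by descending two or three levels in the recursion to the right of $\rone$, using the alternation of directions and \Cref{lem:special-cross-nmc} repeatedly. If $\rone=\rtarget$ there is nothing to prove, so assume $\rone<\rtarget$. Since $\rone$ is the largest $x\le\rtarget$ with $\out{\lactive1}{x}$ special, the chain element after $\rone$ in the call at depth $\depthact$ is some $y>\rtarget$ (or $y=\ractive1$ when $\rone=x_m$). Hence $\Cons(\rone,y,\depthact+1,0)$ is called, and its interval contains $\out{\rone}{\rtarget}$.

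\textbf{Step 1: the cut $\cut{\rone}{\rtarget}$.} Apply \Cref{lem:special-cross-nmc} with $A=\out{\lactive1}{\rone}$ and $B=\out{\ltarget}{\rtarget}$, which cross. Either $A\cup B=\out{\lactive1}{\rtarget}$ or $B\setminus A=\out{\rone}{\rtarget}$ is special. The former is excluded by the maximality of $\rone$, so $\out{\rone}{\rtarget}$ is special.

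\textbf{Step 2: the right-proceeding call on $\out{\rone}{y}$.} Depth $\depthact+1$ is odd, so this call builds a suffix chain $\out{x}{y}$. We want a cut separating $\out{\rone}{\rtarget}$ from everything outside $\out{\lactive1}{\rtarget}$. Pairs with the second atom outside $\out{\lactive1}{y}$ are already covered by $\cut{\lactive1}{y}$ when $y$ is a chain element; when $y=\ractive1$ we must instead use a cut from the parent interval. This may cost one extra cut, possibly the set difference $\cut{\lactive1}{\ractive1}$ added on line~\ref{line:add-active-interval}. That cut is the source of the one allowed non-special cut.

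The remaining pairs have one atom in $\out{\rone}{\rtarget}$ and the other in $\out{\rtarget}{y}$. Let $j'=\max\{x:\rone\le x\le\rtarget,\ \out{x}{y}\text{ special}\}$. If $j'$ exists, $\cut{j'}{y}\in\cL$ covers the pairs with the first atom in $\out{j'}{\rtarget}$. Symmetrically to \Cref{lem:left-edges}, applying \Cref{lem:special-cross-nmc} to $\out{j'}{y}$ and $\out{\rone}{\rtarget}$ shows $\out{\rone}{j'}$ is special. At depth $\depthact+2$ (even, so proceeding from the left) the interval containing $\out{\rone}{j'}$ starts at $\rone$, so $\cut{\rone}{j'}$ is in $\cL$ as a prefix chain element. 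Together, $\cut{\rone}{j'}$ and $\cut{j'}{y}$ cover all pairs between $\out{\rone}{\rtarget}$ and $\out{\rtarget}{y}$.

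\textbf{Step 3: the case where $j'$ does not exist.} Let $z$ be the smallest suffix-chain point above $\rtarget$. Then $\out{\rone}{z}$ is the active interval at depth $\depthact+2$, proceeding from the left, and it contains $\out{\rone}{\rtarget}$. The special interval $\out{\rone}{\rtarget}$ is a prefix chain element there, so $\cut{\rone}{\rtarget}$ is added to $\cL$. That cut alone covers all the required pairs inside $\out{\rone}{z}$. Pairs reaching beyond $z$ are covered by $\cut{z}{y}$. The untouched counter must be handled when no suffix chain exists at all: the call recurses with the same interval, shifts parity, and we argue exactly as above at depth $\depthact+2$, using $\untouched<2$ as in the argument preceding \Cref{lem:left-edges}.

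Counting gives at most $\cut{\rone}{j'}$ (or $\cut{\rone}{\rtarget}$), $\cut{j'}{y}$ (or $\cut{z}{y}$), $\cut{\lactive1}{y}$, and up to two cuts from the parent or untouched handling, for at most $5$ in total.

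\textbf{Main obstacle.} The hard step is the boundary case $y=\ractive1$, where $\out{\rone}{y}$ is the last ring and not a chain set difference. There I would use the parent call's interval $\out{\lactive1}{\ractive1}$. It lies in $\cL$ only if it is special or a set difference of crossing special intervals. Otherwise I must go up to the depth-$(\depthact-1)$ chain that produced $\out{\lactive1}{\ractive1}$, and show via \Cref{lem:special-cross-nmc} that a suitable special cut there separates $\out{\rone}{\rtarget}$ from the outside. I expect this to consume the budget and to produce the single non-special cut.
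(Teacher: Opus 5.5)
Your Steps 1 and 3 are correct and match the paper. $\out{\rone}{\rtarget}$ is special, and if no suffix-chain point of $\Cons(\rone,y,\depthact+1,0)$ lies in $[\rone,\rtarget]$, then $\cut{\rone}{\rtarget}$ is added at depth $\depthact+2$ and by itself covers every required pair.

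Step 2, however, fails. The cut $\cut{j'}{y}$ contains both $\out{j'}{\rtarget}$ and $\out{\rtarget}{y}$. So it separates only pairs whose first atom lies in $\out{\rone}{j'}$, not in $\out{j'}{\rtarget}$ as you claim. Likewise, $\cut{\rone}{j'}$ contains neither atom of a pair in $\out{j'}{\rtarget}\times\out{\rtarget}{y}$. Hence, when $j'<\rtarget$, these pairs are covered by nothing you list.

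Your derivation that $\out{\rone}{j'}$ is special is also invalid. \Cref{lem:special-cross-nmc} needs $B$ to be an NMI, and $\out{\rone}{\rtarget}$ is only known to be special. Even a special $\out{\rone}{j'}$ would not give a cut of $\cL$: the depth-$(\depthact+2)$ interval starting at $\rone$ is $\out{\rone}{x_1}$, where $x_1$ is the smallest suffix-chain point, and $x_1$ can be smaller than $j'$.

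The genuinely symmetric argument, as in \Cref{lem:left-edges}, uses the NMI $B=\out{\ltarget}{\rtarget}$ with $A=\out{j'}{y}$. Maximality of $j'$ rules out $A\setminus B=\out{\rtarget}{y}$, so $\out{j'}{\rtarget}$ is special. It is then a prefix-chain element in the depth-$(\depthact+2)$ ring $\out{j'}{w}$, where $w>\rtarget$ is the next suffix-chain point after $j'$ (or $w=y$). The cuts $\cut{j'}{\rtarget}$ and $\cut{w}{y}$ then handle $u\in\out{j'}{\rtarget}$.

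The second gap is the case $y=\ractive{1}$, which you leave open. Climbing the recursion tree has no evident constant bound: $\out{\lactive{1}}{\ractive{1}}$ can be an end ring of its parent call, which can be an end ring of its own parent, and so on, each level costing new cuts.

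The missing observation is that no cut from above is ever needed. With $x_1$ as above:
\begin{itemize}
\item $\cut{x_1}{y}$ separates $\out{x_1}{\rtarget}$ from everything outside $\out{x_1}{y}$.
\item The depth-$(\depthact+2)$ ring $\out{\rone}{x_1}=\out{\rone}{\rtarget}\setminus\out{x_1}{y}$ is a set difference of two crossing special intervals when $\rone<x_1<\rtarget$; the degenerate cases are immediate. It is therefore added at line~\ref{line:add-active-interval} and separates $\out{\rone}{x_1}$ from everything else.
\end{itemize}
These two cuts handle all $v\notin\out{\rone}{y}$, uniformly in $y$. The one non-special cut is $\cut{\rone}{x_1}$, not $\cut{\lactive{1}}{\ractive{1}}$. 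Together with $\cut{j'}{y}$, $\cut{w}{y}$ and $\cut{j'}{\rtarget}$, this gives the paper's five cuts.

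Your use of $\cut{\lactive{1}}{y}$ when $y$ is a chain point is valid. It does not extend to $y=\ractive{1}$, however, and it is not needed.
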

\begin{nestedproof}
If $\rtarget=\rone$, no such pairs exist and we are done. Similar to the proof of \Cref{lem:left-edges}, we first show that $\out{\rone}{\rtarget}$ is special. Note that unlike \Cref{lem:left-edges}, this does not immediately finish the proof since we cannot guarantee that $\cut{\rone}{\rtarget}$ is added to $\mathcal{L}$ at depth $\depthact+1$ ($\rtarget$ lies in an active interval starting at $\rone$ at depth $\depthact+1$ and line \ref{line:suffix-chains} of \Cref{alg:construct-laminar} adds cuts from right to left). However, we will see that a similar analysis works to show that we can cover the desired pairs of atoms by adding at most $5$ cuts from depths $\depthact+1$ and $\depthact+2$ in $\cL$ to the collection $\Cover'$.

To show that $\out{\rone}{\rtarget}$ is special, apply \Cref{lem:special-cross-nmc} to the crossing intervals $A = \out{\lactive1}{\rone}$ and $B = \out{\ltarget}{\rtarget}$.
Then either $A \cup B$ or $B \setminus A$ is special; but since $\rone<\rtarget$, $A \cup B = \out{\lactive1}{\rtarget}$ is not special by the definition of $\rone$. 
Hence, $B \setminus A = \out{\rone}{\rtarget}$ is special. \Cref{fig:ronertarget} shows the intervals used in the argument.
\def\scale{0.7}
\begin{figure}[htb!]\centering
\begin{tikzpicture}[scale=\scale]
		\draw [color=black] (-2,0) -- (16,0);
        \node [color=black,inner sep=2,label={[yshift=-0.9cm]$\lactive1$},circle,fill] at (0,0) () {};
        \node [color=black,inner sep=2,label={[yshift=-0.9cm]$\ractive1$},circle,fill] at (14,0) () {};
        \node [color=black,inner sep=2,label={[yshift=-0.9cm]$\ltarget$},circle,fill] at (2,0) () {};
        \node [color=black,inner sep=2,label={[yshift=-0.9cm]$\rtarget$},circle,fill] at (11,0) () {};
        \node [color=black,inner sep=2,label={[yshift=-0.9cm]$\lone$},circle,fill] at (4,0) () {};
        \node [color=black,inner sep=2,label={[yshift=-0.9cm]$\rone$},circle,fill] at (8,0) () {};
        \draw [color=red!20,line width=2pt] (0,0.25) -- (8,0.25);
        \draw [color=blue!20,line width=2pt] (2,0.5) -- (11,0.5);
\end{tikzpicture}
\caption{Figure depicting the proof that $\out{\rone}{\rtarget}$ is special. The red interval denotes $A$ and the blue interval denotes $B$.}
\label{fig:ronertarget}
\end{figure}

Let $\out{\lactive2}{\ractive2}$ be the active interval at depth $\depthact+1$ such that $\lactive2 \le \rtarget \le \ractive2$. Note that $\lactive2 = \rone$ and since $r_S > \rone$, $\lactive2 < r_S \le \ractive2 \le \ractive1$. 
Let 
\begin{align*}\lactive3 &= \max\{ x: \lactive2 \le x \le \rtarget, \out{x}{\ractive2}\text{ is special}\},\\
\ractive3 &= \min\{ x: \rtarget \le x \le \ractive2, \out{x}{\ractive2}\text{ is special}\}\text{, and}\\
\rlast &= \min\{ x: \lactive2 \le x \le \rtarget, \out{x}{\ractive2}\text{ is special}\}.
\end{align*}
If $\lactive3$ does not exist, then there exists an active interval $[\lactive2,r_3']$ at depth $\depthact +2$ such that $\lactive2 \le \rtarget \le r_3'$. Since $\out{\rone}{\rtarget} = \out{\lactive2}{\rtarget}$, the cut $\cut{\rone}{\rtarget}$ will be added to $\mathcal{L}$ by line \ref{line:prefix-chains} of \Cref{alg:construct-laminar} at depth $\depthact+2$ and we are done. Otherwise, $\lactive3, \ractive3, \rlast$ exist. Since $\out{\lactive3}{\ractive2}$, $\out{\ractive3}{\ractive2}$, and $\out{\rlast}{\ractive2}$ are special, line \ref{line:suffix-chains} of \Cref{alg:construct-laminar} adds the corresponding cuts to $\cL$. We add these $3$ cuts to $\Cover'$. The only remaining pairs of outside atoms that are currently uncovered are pairs of outside atoms with:
\begin{enumerate}
\item one in $\out{\rone}{\rlast}$ and the other not in $\out{\lactive1}{\ractive2}$, and
\item one in $\out{\lactive3}{\rtarget}$ and the other in $\out{\rtarget}{\ractive3}$.
\end{enumerate}
\Cref{fig:halfwaypoint} shows the relative positions of the defined points, intervals corresponding to cuts added to $\cL$, and pairs of intervals that are still uncovered.
\def\scale{0.7}
\begin{figure}[htb!]\centering
\begin{tikzpicture}[scale=\scale]
		\draw [color=black] (-2,0) -- (16,0);
        \node [color=red,inner sep=2,label={[yshift=-0.9cm]$\lactive1$},circle,fill] at (-2,0) () {};
        \node [color=red,inner sep=2,label={[yshift=-0.9cm]$\ractive1$},circle,fill] at (15,0) () {};
        \node [color=red,inner sep=2,label={[yshift=-0.9cm]$\ractive2$},circle,fill] at (13.5,0) () {};
        \node [color=red,inner sep=2,label={[yshift=-0.9cm]$\rlast$},circle,fill] at (6.5,0) () {};
        \node [color=red,inner sep=2,label={[yshift=-0.9cm]$\lactive3$},circle,fill] at (8.5,0) () {};
        \node [color=red,inner sep=2,label={[yshift=-0.9cm]$\ractive3$},circle,fill] at (11.5,0) () {};
        \node [color=green,inner sep=2,label={[yshift=-0.9cm]$n$},circle,fill] at (16,0) () {};
        \node [color=blue,inner sep=2,label={[yshift=-0.9cm]$\ltarget$},circle,fill] at (0,0) () {};
        \node [color=blue,inner sep=2,label={[yshift=-0.9cm]$\rtarget$},circle,fill] at (10,0) () {};
        \node [color=black,inner sep=2,label={[yshift=-0.9cm]$\lone$},circle,fill] at (2,0) () {};
        \node [color=red,inner sep=2,label={[yshift=-0.9cm]$\rone=\lactive2$},circle,fill] at (4,0) () {};
        \draw [color=red!20,line width=2pt] (-2,0.25) -- (2,0.25);
        \draw [color=red!20,line width=2pt] (-2,0.5) -- (4,0.5);
        \draw [color=red!20,line width=2pt] (11.5,0.25) -- (13.5,0.25);
        \draw [color=red!20,line width=2pt] (8.5,0.5) -- (13.5,0.5);
        \draw [color=red!20,line width=2pt] (6.5,0.75) -- (13.5,0.75);
        \draw [color=blue!20,line width=2pt] (4,-1.3) -- (6.5,-1.3);
        \draw [color=blue!20,line width=2pt] (13.5,-1.3) -- (16,-1.3);
        \draw [color=green!20,line width=2pt] (8.5,-1.3) -- (10,-1.3);
        \draw [color=green!20,line width=2pt] (10,-1.5) -- (11.5,-1.5);
\end{tikzpicture}
\caption{A figure indicating the relative positions of the defined points and the cuts in $\mathcal{L}$ we considered so far (denoted as red lines). The pairs of atoms between the pair of blue intervals and the pair of green intervals are currently unaccounted for.}
\label{fig:halfwaypoint}
\end{figure}

At depth $\depthact+2$, the intervals $\out{\rone}{\rlast}$ and $\out{\lactive3}{\ractive3}$ are active. Note that $\out{\rone}{\rlast} = \out{\rone}{\rtarget} \setminus \out{\rlast}{\ractive2}$ is the set difference of two crossing special intervals. Thus, line \ref{line:add-active-interval} of \Cref{alg:construct-laminar} adds $\cut{\rone}{\rlast}$ to $\cL$, which covers the pairs of atoms described in (i). We add $\cut{\rone}{\rlast}$ to $\Cover'$.

It remains to show that the pairs of atoms described in (ii), where one is in $\out{\lactive3}{\rtarget}$ and the other is in $\out{\rtarget}{\ractive3}$, are covered by some cut in $\cL$. \nnote{removed a prime.}
If $\lactive3=\rtarget$, we are done, so suppose $\lactive3<\rtarget$. 

By \Cref{lem:special-cross-nmc} with $A = \out{\lactive3}{\ractive2}$ and $B = \out{\ltarget}{\rtarget}$, either $A \cap B$ or $A \setminus B$ is special. Since $\lactive3 < \rtarget$, $A \setminus B = \out{\rtarget}{\ractive2}$ is not special. Hence, $A \cap B = \out{\lactive3}{\rtarget}$ is special.
This means that $\cut{\lactive3}{\rtarget}$ will be added to $\mathcal{L}$ by line \ref{line:prefix-chains} of \Cref{alg:construct-laminar} during the call with active interval $\out{\lactive3}{\rtarget}$, which covers the desired pairs of atoms. We add $\cut{\lactive3}{\rtarget}$ to $\Cover'$. \Cref{fig:lactive3target} shows the intervals used in the argument.
\def\scale{0.7}
\begin{figure}[htb!]\centering
\begin{tikzpicture}[scale=\scale]
		\draw [color=black] (-2,0) -- (16,0);
        \node [color=black,inner sep=2,label={[yshift=-0.9cm]$\lactive2$},circle,fill] at (2,0) () {};
        \node [color=black,inner sep=2,label={[yshift=-0.9cm]$\ractive2$},circle,fill] at (14,0) () {};
        \node [color=black,inner sep=2,label={[yshift=-0.9cm]$\ltarget$},circle,fill] at (0,0) () {};
        \node [color=black,inner sep=2,label={[yshift=-0.9cm]$\rtarget$},circle,fill] at (7,0) () {};
        \node [color=black,inner sep=2,label={[yshift=-0.9cm]$\lactive3$},circle,fill] at (4,0) () {};
        \node [color=black,inner sep=2,label={[yshift=-0.9cm]$\ractive3$},circle,fill] at (10,0) () {};
        \draw [color=blue!20,line width=2pt] (0,0.25) -- (7,0.25);
        \draw [color=red!20,line width=2pt] (4,0.5) -- (14,0.5);
\end{tikzpicture}
\caption{Figure depicting the proof that $\out{\lactive3}{\rtarget}$ is special. The red interval denotes $A$ and the blue interval denotes $B$.}\label{fig:lactive3target}
\end{figure}
\end{nestedproof}
Taking $\Cover$ to be the collection of cuts containing $\cut{\lactive1}{\lone}$, $\cut{\lactive1}{\rone}$, the cut from \Cref{lem:left-edges} and the cuts in $\Cover'$ from \Cref{lem:right-edges} completes the proof. 
\end{proof}

\subsection{Extending the Analysis to Inside Atoms}\label{subsec:general-case}
Next, we show how to extend \Cref{lem:outside-edges-covered} to cover \emph{all} pairs of atoms covered by an $\eta$-near minimum cut, not merely all pairs of outside atoms.

\begin{lemma}\label{lem:inside-edges-covered}
Let $S$ be an $\eta$-near minimum cut with $\eta = \onefortieth$ in a connected component of cuts $\cC$ with $|\cC| > 1$. Let $\cL$ be the laminar family constructed for $\cC$ by \cref{alg:construct-laminar}. Then, there exists a collection of cuts $\Cover \subseteq \cL$ of size at most $8$ such that if $(u,v)$ is a pair of atoms covered by $S$, then $(u,v)$ is covered by $\Cover$.
\end{lemma}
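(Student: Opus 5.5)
The plan is to take the collection $\Cover$ from \Cref{lem:outside-edges-covered} and show that it already covers every pair of atoms covered by $S$, including pairs with inside atoms. We use the structure of $\Cover$: all cuts in it are special, hence $2\eta$-NMCs, except at most one, which is a set difference of two crossing special cuts and so a $4\eta$-NMC. Every cut in $\Cover$ is therefore a $\onefifth$-NMC whose atoms respect $\cA(\cC)$.

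Fix a pair $(u,v)$ covered by $S$, and suppose no cut in $\Cover$ separates $u$ from $v$. If both are outside atoms, \Cref{lem:outside-edges-covered} already handles the pair. So assume, say, that $u$ is an inside atom; $v$ may be inside or outside. Each cut $C\in\Cover$ corresponds to a half-plane of $P$ determined by a diagonal. This needs justification: by \Cref{lem:extendedprop20} and \Cref{lem:canonicalcrossing}, a canonical cut $\cut{l}{r}$ is determined by its outside atoms, and its inside atoms are those lying in the half-plane cut off by the chord $(p_l,p_r)$. If this is not literally available, a first step would be to prove it for $\onefifth$-NMCs via \Cref{thm:halfplanes}. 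Given that, let $H$ be the intersection of the half-planes, one per cut in $\Cover\cup\{S\}$, chosen to be the side containing $u$. For each $C\in\Cover$, $v$ lies on the same side as $u$, while $S$ separates them.

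The core step is to show that the region $H_u$ cut out by the $|\Cover|+1\le 9$ diagonals on $u$'s side (in particular, on $u$'s side of $S$) contains no outside atom. Then \Cref{thm:halfplanes} applies, since $9<1/(2\eta)=20$, and shows that $H_u$ contains no inside atom, contradicting $u\in H_u$. To argue that $H_u$ contains no outside atom, suppose $a$ is an outside atom in $H_u$. Then $a$ is on $u$'s side of $S$ and on $u$'s side of every $C\in\Cover$.

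If $v$ is outside, then $v$ is on $u$'s side of every $C$ but not of $S$. Hence $(a,v)$ is a pair of outside atoms covered by $S$ but not by $\Cover$, contradicting \Cref{lem:outside-edges-covered}.

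If $v$ is inside, apply the same argument to $v$. Either the region $H_v$ built on $v$'s side contains no outside atom, and \Cref{thm:halfplanes} gives a contradiction, or it contains an outside atom $b$. In the latter case, $a$ and $b$ are outside atoms on opposite sides of $S$ and on the same side of every $C\in\Cover$, since both agree with $u$ and $v$, which agree with each other. This again contradicts \Cref{lem:outside-edges-covered}.

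The main obstacle is the geometric translation: we must confirm that cuts in $\Cover$ which are not in $\cC$ (special or difference cuts, only $4\eta$-NMCs) still behave like half-planes of $P$ with respect to inside atoms, so that \Cref{thm:halfplanes} can be invoked. I would first try to derive this from \Cref{lem:extendedprop20}. Write $\cut{l}{r}$ as a Boolean combination of at most two or four representing diagonals of NMIs, and express its inside atoms as an intersection or union of half-planes. \Cref{thm:halfplanes} still applies to that finer family of diagonals because the total count stays below $20$. This may require replacing each set in $\Cover$ by up to four diagonals, giving roughly $4\cdot 8+1$ diagonals. That would exceed $20$ and force a recount, so the argument should be arranged so that each special cut contributes only its own chord.
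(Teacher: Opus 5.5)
\textbf{There is a genuine gap.} Your overall plan matches the paper's: reuse $\Cover$ from \Cref{lem:outside-edges-covered}, use \Cref{thm:halfplanes} to show that any region containing an inside atom also contains an outside atom, and transfer separation from outside atoms to inside atoms. Your contradiction argument with $a\in H_u$ and $b\in H_v$ is essentially the paper's argument.

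The gap is the geometric step everything rests on. You claim that each $\cut{l}{r}\in\Cover$, as a set of atoms, is the half-plane cut off by the single chord $(p_l,p_r)$. This is false in general. It also cannot be derived from \Cref{lem:extendedprop20} or \Cref{thm:halfplanes}, because the chord need not be a representing diagonal and its half-plane need not be a $\onefifth$-NMC.

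Here is a concrete counterexample.
\begin{itemize}
\item Take a cycle $v_1,\dots,v_n$ with $n\ge 41$, put $20$ parallel edges between consecutive vertices, and add a hub joined to each $v_j$ by one edge.
\item Then $k=41$, and the component $\cC$ consists of the pairs $\{v_j,v_{j+1}\}$, which form an $n$-cycle.
\item The hub is an inside atom, lying in the central cell bounded by the diagonals $(p_i,p_{i+2})$.
\item The special cut $\{v_j,v_{j+1}\}\cup\{v_{j+1},v_{j+2}\}$ has shadow $\out{j}{j+3}$ and excludes the hub.
\item The chord $(p_j,p_{j+3})$ is at distance $R\cos(3\pi/n)<R\cos(2\pi/n)$ from the centre, so it passes through the interior of the central cell.
\end{itemize}
So a union-type special cut is not a single chord half-plane. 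Its complement is an intersection of two half-planes, but its own side is a union of two. Your region ``$u$'s side of every cut in $\Cover\cup\{S\}$'' is therefore not an intersection of half-planes, and the bound of $9$ half-planes has no basis.

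\textbf{How to fix it.} The multi-diagonal representation you dismissed as too expensive is what works, once you use the structure of $\Cover$ you recorded at the start.
\begin{itemize}
\item By \Cref{def:special-cut}, each special cut is, as an atom set, a Boolean combination of the half-planes of at most two representing diagonals (those of the NMIs defining it).
\item The only non-special member of $\Cover$ is a difference of two crossing special cuts, so it needs at most four diagonals.
\item Together with the diagonal of $S$, the resulting set $\cD$ has size at most $1+2\cdot 7+4=19<20=1/(2\eta)$, not $4\cdot 8+1$.
\end{itemize}
Then replace ``$u$'s side of each cut'' by the cell of the arrangement of $\cD$ containing $u$. This cell is an intersection of at most $19$ half-planes, and every cut in $\Cover\cup\{S\}$ is constant on it. \Cref{thm:halfplanes} then shows it contains an outside atom, and your argument with $a$ and $b$ goes through unchanged. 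This is exactly the paper's proof.
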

\begin{proof}
Let $\Cover$ be the collection obtained from \Cref{lem:outside-edges-covered}. We show that it covers all pairs of atoms covered by $S$ even if some of them are inside atoms.

Note that cuts corresponding to special intervals can be represented by the intersection of at most $2$ halfplanes bounded by diagonals of the polygon and cuts corresponding to set differences of two crossing special intervals can be represented by the intersection of at most $4$ such halfplanes. Consider the set of diagonals $\cD$ of the polygon corresponding to either $S$ or a halfplane bounding a cut in $\Cover$. Note that $|\cD| \le 1+2 \cdot 7+4= 19$, since by \Cref{lem:outside-edges-covered}, $\Cover$ contains at most $7$ special cuts and one cut corresponding to set differences of two crossing special intervals. This set of diagonals divide the polygon into regions with at most 19 sides (since each diagonal in $\cD$ corresponds to at most one side). By \Cref{thm:halfplanes}, each such region with no outside atoms is empty. Hence, the only inside atoms are in regions containing at least one outside atom. 

Consider a pair of atoms $(u,v)$ covered by $S$. Since each inside atom is in the same region as some outside atom, there exist outside atoms $u',v'$ such that $u,u'$ and $v,v'$ lie in the same region. However, a  diagonal separates $u,v$ if and only if it separates $u',v'$, so a cut in $\Cover$ separates $u,v$ if and only if it separates $u',v'$ (since each cut can be defined by the intersection of some halfplanes defined by $\cD$). In particular, $u',v'$ are outside atoms separated by $S$. By \cref{lem:outside-edges-covered}, $u'$ and $v'$ are separated by some cut in $\Cover$. Hence, $u$ and $v$ are also separated by some cut in $\Cover$. 
\end{proof}
\begin{figure}
    \centering 
\begin{tikzpicture}[scale=0.8]
        \node [fill, circle, color=green!70!black, inner sep=0.5mm] at (0,0) (v) {};
        \node at (0,0.3) () {\footnotesize $v$};
        \node [fill, circle, color=green!70!black, inner sep=0.5mm] at (-1,-1.7) (u) {};
        \node at (-1,-1.4) () {\footnotesize $u$};
        
		\foreach \i/\l in {0/15,18/16,36/17,54/18,72/19,90/0,108/1,126/2,144/3,162/4,180/5,198/6,216/7,234/8,252/9,270/10,288/11,306/12,324/13,342/14,360/15}
			\node [fill, circle, inner sep=0.5mm] at (\i:3) (p_\l) {};
        
		\foreach \i/\j in {0/1, 1/2, 2/3, 3/4, 4/5, 5/6, 6/7, 7/8, 8/9, 9/10, 10/11, 11/12, 12/13, 13/14, 14/15, 15/16, 16/17, 17/18, 18/19, 19/0}
			\path (p_\i) edge (p_\j);

        \path [line width=1.2pt,color=blue] (p_4) edge (p_11);
        \path [line width=1.2pt,color=blue] (p_7) edge (p_11);
        \path [line width=1.2pt,color=blue] (p_11) edge (p_15);
        \path [line width=1.2pt,color=red] (p_6) edge (p_13);
        
        \foreach \i/\l in {207/6,99/0}
			\node [fill,red,circle,inner sep=0.5mm] at (\i:3) (){}; 
        \foreach \i/\l in {207/6,99/0}
			\node [color=red] at (\i:3.5) () {\small $a_{\l}$};

        \foreach \i/\l in {72/-1, 90/0}
			\node  at (\i:3.3) () {\footnotesize $p_{\l}$};
\end{tikzpicture}
\caption{\small The blue lines are the diagonals in $\cD$ and the red line is the representing diagonal of $S$. In the argument of the proof of \Cref{lem:inside-edges-covered}, we have $u'=a_6$ and $v'=a_0$.
}
\label{fig:inside-edges-covered}
\end{figure}
\Cref{lem:inside-edges-covered} finishes the proof of \Cref{thm:L-covers-nmc-general}.

\subsection{Constructing the Full Laminar Family}\label{subsec:full-laminar-family}

Now that we have shown how to construct a laminar family $\cL_P$ on the atoms of a polygon $P$ for a connected component of cuts $\cC$ so that all cuts in $\cC$ are implied by $\cL_P$, we will describe how to construct the overall laminar family. We will begin by constructing a cross-free family $\cR$. 

In particular, initialize a cross-free family $\cR$. Add to $\cR$ one side of each $\eta$-near minimum cut which is not crossed (i.e. every connected component $\cC$ with $|\cC|=1$). Then, construct a polygon $P$ for each connected component $\cC$ of near minimum cuts with $|\cC| > 1$. Finally, run our construction procedure on each polygon $P$ (setting one of the outside atoms as root), and add the resulting laminar family of cuts $\cL_P$ to $\cR$. 

Here we make use of the following from Bencz{\'u}r:
\begin{lemma}[{\cite[Lemma 4.1.7]{Ben97}}]\label{lem:root_atom} Let $\cC,\cC'$ be two distinct connected components of crossing cuts for a family of cuts of $G=(V,E)$. Then, there exists an atom $a \in \cA(\cC)$ and $a' \in \cA(\cC')$ so that $a \cup a' = V$. 
\end{lemma}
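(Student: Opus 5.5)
The plan is to use only one fact: no cut of $\cC$ crosses a cut of $\cC'$, since otherwise the two would be adjacent in the cross graph and $\cC=\cC'$. From this, and from connectivity of each component in the cross graph, I will show three things in turn. First, every $S'\in\cC'$ has a shore lying inside a single atom of $\cC$. Second, this atom is the same atom $a$ for all $S'\in\cC'$. Third, $V\setminus a$ lies inside a single atom $a'$ of $\cC'$, which gives $a\cup a'=V$.

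\emph{Step 1: one shore of each $S'\in\cC'$ lies in an atom of $\cC$.} Fix $S'\in\cC'$.
\begin{enumerate}
\item For each $S\in\cC$, $S$ and $S'$ do not cross. So one of the four sets $S\cap S'$, $S\setminus S'$, $S'\setminus S$, $V\setminus(S\cup S')$ is empty. Equivalently, some shore of $S'$ is contained in some shore of $S$.
\item Pick $S_1\in\cC$ and a shore $Y$ of $S'$ contained in a shore of $S_1$.
\item Let $S_2\in\cC$ cross $S_1$. Suppose $Y$ were not contained in a shore of $S_2$. By non-crossing, $V\setminus Y$ must then lie in a shore of $S_2$. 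So the complementary shore $Z$ of $S_2$ satisfies $Z\subseteq Y$, and $Y$ lies in a shore of $S_1$. Then $Z$ lies in a shore of $S_1$, which contradicts $S_1,S_2$ crossing, since all four quadrants are nonempty.
\item Hence $Y$ lies in a shore of $S_2$. Since $\cC$ is connected in the cross graph, this propagates to every cut of $\cC$.
\item So $Y$ lies on one side of every $S\in\cC$. Because atoms are the classes of the coarsest partition refining all cuts of $\cC$, $Y$ lies in a single atom of $\cC$. If $|\cC|=1$ this is immediate, since the atoms are just $S$ and $V\setminus S$.
\end{enumerate}

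\emph{Step 2: the atom is the same for all of $\cC'$.} For each $S'\in\cC'$, choose such a shore $Y_{S'}$ and its atom $a_{S'}\in\cA(\cC)$. If $S'_1,S'_2\in\cC'$ cross, then every pair of shores intersects, so $Y_{S'_1}\cap Y_{S'_2}\neq\emptyset$. Since distinct atoms are disjoint, this forces $a_{S'_1}=a_{S'_2}$. Note that this holds whichever shore was chosen, so there is no issue when both shores happen to lie in atoms. By connectivity of $\cC'$ in the cross graph, there is a single atom $a\in\cA(\cC)$ with $Y_{S'}\subseteq a$ for all $S'\in\cC'$.

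\emph{Step 3: conclude.} For every $S'\in\cC'$, we have $V\setminus a\subseteq V\setminus Y_{S'}$, which is a shore of $S'$. So $V\setminus a$ is never separated by any cut of $\cC'$, and therefore lies inside a single atom $a'\in\cA(\cC')$. (If $a=V$ there is nothing to prove.) Thus $a\cup a'=V$.

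The main obstacle is the propagation argument in Step 1. The point to get right is that non-crossing only says \emph{some} shore of $S'$ is nested in \emph{some} shore of $S$. The crossing of $S_1,S_2$ is exactly what excludes the alternative where the complement of $Y$ is the nested shore. Everything after that is bookkeeping with connectivity of the cross graph.
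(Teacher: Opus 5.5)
Your proof is correct and complete. The paper gives no proof of this lemma; it quotes it from Bencz{\'u}r's thesis, so there is no in-paper argument to compare against. What you supply is a self-contained, elementary derivation that uses only two facts:
\begin{itemize}
\item no cut of $\cC$ crosses a cut of $\cC'$;
\item each component is connected in the cross graph.
\end{itemize}
In particular, it uses nothing about near-minimum cuts or the polygon representation. That matches the generality in which the lemma is stated, namely an arbitrary family of cuts.

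The delicate step is in Step 1: showing that the \emph{same} shore $Y$ of $S'$ stays nested as you pass from $S_1$ to a crossing cut $S_2$. You handle it correctly. The alternative would put a shore of $S_2$ inside a shore of $S_1$, which empties one of their four quadrants and contradicts that $S_1$ and $S_2$ cross.

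Steps 2 and 3 are also sound. The choice of $Y_{S'}$ is indeed irrelevant, because crossing cuts have pairwise intersecting shores. The complement $V \setminus a$ is nonempty, since every cut is proper, so it lies in a single atom of $\cC'$.
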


\begin{lemma}\label{lem:cross-free}
    $\cR$ is a cross-free family.
\end{lemma}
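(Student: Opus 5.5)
To prove that $\cR$ is cross-free, I would take two sets $X,Y\in\cR$ and show they do not cross. I would split into cases by where $X$ and $Y$ came from. Each set in $\cR$ is either an uncrossed $\eta$-NMC (a singleton component) or lies in some $\cL_P$ for a component $\cC$ with $|\cC|>1$. By \Cref{thm:L-covers-nmc-general}, every set in $\cL_P$ is a union of atoms of $\cA(\cC)$.

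\emph{Case 1: both sets come from the same $\cL_P$.} Then $\cL_P$ is laminar, and a laminar family is in particular cross-free.

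\emph{Case 2: both sets come from the same singleton component.} Then $X=Y$ (we chose one shore), so there is nothing to prove.

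\emph{Case 3: $X$ is associated with component $\cC$ and $Y$ with a different component $\cC'$.} A singleton component $\{S\}$ has atoms $S$ and $\bar S$, so the same statement covers it. Both $X$ and $Y$ are unions of atoms of their respective components. By \Cref{lem:root_atom} there are atoms $a\in\cA(\cC)$ and $a'\in\cA(\cC')$ with $a\cup a'=V$. The aim is to use this to show that every atom of $\cA(\cC)$ other than $a$ lies inside $a'$ (it avoids $a$, so it sits in $a'$), and symmetrically for $\cC'$. Then I would do a short case analysis.

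\emph{Subcase 3a: $a\subseteq X$, after replacing $X$ by its complement if necessary.} Complementing does not affect crossing. Then $\bar X$ is a union of atoms of $\cC$ other than $a$, so $\bar X\subseteq a'$. Since $Y$ is a union of atoms of $\cC'$, either $a'\subseteq Y$ or $a'\cap Y=\emptyset$.
\begin{itemize}
\item If $a'\cap Y=\emptyset$, then $\bar X\cap Y=\emptyset$, i.e.\ $Y\subseteq X$.
\item If $a'\subseteq Y$, then $\bar X\subseteq Y$, i.e.\ $X\cup Y=V$.
\end{itemize}
Either way one of the four regions $X\cap Y$, $X\setminus Y$, $Y\setminus X$, $V\setminus(X\cup Y)$ is empty, so $X$ and $Y$ do not cross.

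\emph{Main obstacle.} The main thing to verify is that \Cref{lem:root_atom} applies in this generality, in particular when one component is a singleton and its ``atoms'' are just the two shores. I would also check that the sets of $\cL_P$ really respect atom boundaries, which is exactly the last clause of \Cref{thm:L-covers-nmc-general}. With these two facts in place the argument is purely set-theoretic.
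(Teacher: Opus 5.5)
Your proof is correct and follows the paper's argument: the paper also treats every set in $\cR$ as a union of atoms of its component, handles same-component pairs by laminarity, and applies \Cref{lem:root_atom} to pairs from different components. Your reduction via complementation simply compresses the paper's four-case analysis into two cases. The paper likewise applies \Cref{lem:root_atom} to singleton components without further comment, taking the two shores as the atoms, exactly as you do.
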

\begin{proof}
    Every cut added to $\cR$ is the union of some set of atoms in $\cA(\cC)$ for some connected component $\cC$ (note if $|\cC|=1$ then $\cA(\cC) = C$ for the unique $C \in \cC$). 

    By construction, every $S,T$ over the same set of atoms $\cA(\cC)$ do not cross. So, suppose $S$ is the union of a set of atoms in $\cA(\cC)$ and $T$ is the union of a set of atoms in $\cA(\cC')$ for $\cC \not= \cC'$. By \cref{lem:root_atom}, there is an atom $a \in \cA(\cC)$ and $a' \in \cA(\cC')$ so that $a \cup a' = V$. Recall that $\cA(\cC)$ and $\cA(\cC')$ both form partitions over the vertex set.
    \begin{itemize}
        \item If $a \subseteq S$ and $a' \subseteq T$, then $S \cup T = V$.
        \item If $a \subseteq S$ and $a' \cap T = \emptyset$, then $T \subseteq S$. 
        \item If $a \cap S = \emptyset$ and $a' \subseteq T$, then $S \subseteq T$. 
        \item If $a \cap S = \emptyset$ and $a' \cap T = \emptyset$, then $S \cap T = \emptyset$.
    \end{itemize}
    In each possible case they do not cross, so $\cR$ is cross-free. 
\end{proof}

Finally we will let $\cL$, our final laminar family, be the result of fixing an arbitrary node $r \in V$ and for each $S \in \cR$, adding whichever of $S$ and $V \setminus S$ does not contain $r$. 
Given this, we can prove our main theorem.
\mainthm*
\begin{proof}
    We first compute the set of $\eta$-near minimum cuts of $G$, and for each connected component $\cC$, we compute the polygon representation. This can be done in polynomial time \cite{Ben95}. For each connected component (with given polygon representation $P$) we apply \cref{thm:L-covers-nmc-general} and produce a laminar family $\cL$ as described above using \cref{lem:cross-free} so that every cut $S \in \cC$ is covered by at most 8 cuts in $\cL$. 

    Then, apply \cref{thm:strong-tree-laminar} to $\cL$ to find a spanning tree $T$ which has at most $11$ edges in every cut in $\cL$ and return $T$.

    Fix an $\eta$-near minimum cut $S$. If it does not lie in a connected component of size at least 2, then $|T \cap \delta(S)| \le 11$ immediately since $S \in \cL$. Otherwise, $S$ lies in some polygon $P$. By \cref{thm:L-covers-nmc-general}, the edges of $S$ are contained in the edges of at most 8 cuts from $\cL$. So, $|T \cap \delta(S)| \le 88$. 
\end{proof}

\paragraph{Acknowledgments.} Initial discussions on this topic took place during the Trimester Program on Combinatorial Optimization at the Hausdorff Research Institute for Mathematics; we are grateful to the institute for its support.

\printbibliography

\appendix
\section{Corollaries from Prior Work}\label{appendix}

Recall the theorem mentioned in the introduction:
\thmthin*
As mentioned, this is not exactly the statement from \cite{KO23}, and further the second statement for laminar families requires some small additional argument. 
We provide these now.

Klein and Olver~\cite{KO23} show that given a feasible solution $x \in \R_{\ge 0}^E$ to the LP relaxation which requires that $x$ is in the spanning tree polytope and $x(\delta(S)) \le b_S$ for integral bounds $b_S$, one can round $x$ to a spanning tree that violates the bounds by at most a factor of 22. One can exhibit a point in the dominant of this polytope in a general $k$-edge-connected graph by setting $x_e = \frac{2}{k}$ for all edges, thus giving $b_S = \lceil \frac{2}{k}|\delta(S)| \rceil$ and an overall violation of $22 \cdot \lceil \frac{2}{k}|\delta(S)| \rceil \le 22 \cdot \frac{2}{k}|\delta(S)| + 22 \le \frac{66}{k}|\delta(S)|$, where we used that $|\delta(S)| \ge k$. 

Next we consider the case where  $|\delta(S)| \leq \tfrac43k$ for all $S \in \cL$.
\cite{KO23} call a point $x$ \emph{$\cL$-aligned} if $x(E(S)) = |S|-1$ for all $S \in \cL$, where $E(S)$ is the set of edges with both endpoints in $S$. 
For $\cL$-aligned points, they give a guarantee of $2 x(\delta(S)) + 3$. Shortly, we will show that the vector $\bar{x} \in \R_{\ge 0}^E$ defined by $\bar{x}_e = \frac{3}{k}$ for all edges dominates a point $x'$ which is $\cL$-aligned. Using that $|\delta(S)| \leq \frac43k$, this allows us to  obtain a guarantee of
    $$2 \cdot \frac{3}{k}|\delta(S)| + 3 \le 11.$$
    
    Now we show that $\bar{x}$ dominates an $\cL$-aligned point $x'$. In particular, it suffices to show that $\bar{x}$ is in the dominant of the spanning tree polytope of $G[S]$ for all $S \in \cL$ (see \cite{KO23} for further details). The standard formulation of the dominant of the spanning tree polytope requires that for any partition $\Pi$ of the vertex set, we have $\sum_{P \in \Pi} \bar{x}(\delta(P)) \ge 2(|\Pi|-1)$. Now, let $P_1,\dots,P_\ell$ be any partition of a set $S \in \cL$. Then
    $$\sum_{P \in \Pi} \bar{x}(\delta(P)) \ge 3|\Pi| - \bar{x}(\delta(S)),$$
    where we use that $\bar{x}(\delta(P)) \ge 3$ for all partitions, as $G$ is $k$-edge-connected. Since $\bar{x}(\delta(S)) = \tfrac3k |\delta(S)| \le \frac{4}{3}k \cdot \frac{3}{k}$, and $|\Pi| \geq 2$, we finish the proof by noticing:
    $$\sum_{P \in \Pi} \bar{x}(\delta(P)) \geq 3|\Pi| - 4 \ge 2|\Pi| - 2 = 2(|\Pi|-1).$$

\nnote{I removed the remark; if we put it back, maybe say the strengthening more clearly?}
\end{document}